\newtheorem{theorem}{Theorem}[section] 
\title{Credit Spreads' Term Structure: Stochastic Modeling with CIR++ Intensity}
\author{ {Mohamed Ben~Alaya} \\
	LMRS, UMR 6085 CNRS\\ 
	Universit\'e de Rouen Normandie\\
	1 Rue Thomas Becket, 76130 Mont-Saint-Aignan\\
	\text{mohamed.ben-alaya@univ-rouen.fr} 
	\And
    {Ahmed~Kebaier} \\ 
	LAMME, UMR 8071\\
	Universit\'e d'Evry\\
	23 Bd. de France, 91037 \'Evry Cedex\\
	\text{ahmed.kebaier@univ-evry.fr} 
 	\And
    {Djibril~Sarr} \\
	LAGA, UMR  7539 CNRS\\
	Universit\'e Sorbonne Paris Nord\\
	99 Av. Jean Baptiste Cl\'ement, 93430 Villetaneuse\\
	FBH Associ\'es \\
	11 Rue du 4 septembre, 75002 Paris\\
	\text{sarr@math.univ-paris13.fr} \\
	\text{djibril.sarr@fbh-associes.com} 
}
\date{}
\begin{document}
\maketitle

\begin{abstract}

This paper introduces a novel stochastic model for credit spreads. The stochastic approach leverages the diffusion of default intensities via a CIR++ model and is formulated within a risk-neutral probability space. Our research primarily addresses two gaps in the literature. The first is the lack of credit spread models founded on a stochastic basis that enables continuous modeling, as many existing models rely on factorial assumptions. The second is the limited availability of models that directly yield a term structure of credit spreads. An intermediate result of our model is the provision of a term structure for the prices of defaultable bonds. We present the model alongside an innovative, practical, and conservative calibration approach that minimizes the error between historical and theoretical volatilities of default intensities. We demonstrate the robustness of both the model and its calibration process by comparing its behavior to historical credit spread values. Our findings indicate that the model not only produces realistic credit spread term structure curves but also exhibits consistent diffusion over time. Additionally, the model accurately fits the initial term structure of implied survival probabilities and provides an analytical expression for the credit spread of any given maturity at any future time.

\end{abstract}

\keywords{Credit spread, Defaultable bond, CIR++ intensity, Quantitative finance, Risk measurement}

\section{Introduction}
\label{sec:SOA}
Credit spreads are a key indicator in finance. They are a versatile tool that effectively serve different purposes. Their most common use is probably as a creditworthiness measure of counterparties, as high credit spreads attributed to an entity (corporates, sovereigns, \dots) indicate that the market perceives a high credit risk, while low values suggest lower risks. Credit spreads are also required when pricing most market instruments. For instance, discount factors, $d(t,T)$ at time $t$ and maturity $T$ are function of credit spreads, $\operatorname{Sp}(t,T)$ and zero-coupon rates, $Z(t,T)$, as in its simplest expression (no adjustment spread for instance), we can write $d(t,T)=e^{-t\left[Z(t,T)+\operatorname{Sp}(t,T) \right]}$. This implies that computing net present values for most products (securities, bonds, interest rate swaps, \dots) requires credit spreads. Therefore, when performing full-revaluation risk computations, practitioners require models that can accurately propagate credit spreads. Similarly, regulatory risk computations like Credit Valuation Adjustment (CVA) often require credit spreads (see, for example, \citep{basel_cva}). In a less common use, they are also known to be macroeconomic indicators that can inform the state of an economy \citep{imf_spread, ecb_spread}. 

For these reasons, financial institutions need to have reliable credit spread models. The purpose of this document is to introduce a new stochastic model for credit spreads. The model diffuses default intensities through a CIR++ (shifted Cox-Ingersoll-Ross) intensity model and, through an expression that is obtained in section \ref{sec:RNPmodel} computes the credit spread. Our model's interest is first in the fact that it is derived from a classic stochastic model and therefore retains all the desirable properties from stochastic differential equation (SDE) driven models like most interest rate models. Indeed, the model natively provides a term structure of credit spreads, and it fits at $t=0$ the term structure of the market implied survival probability curve. Also, the credit spread formula in the model is a closed formula. As we will discuss in section \ref{sec:calib}, the model can be easily calibrated as, from the SDE describing the default intensity or the equation deriving the spreads expression, one can infer different types of equations describing implied volatilities, instruments prices and so on to calibrate to. It produces realistic spreads curves dynamics that are similar to market observations. 

The analytical development we conducted to obtain an explicit expression of credit spreads also yielded another very interesting result: an analytical expression for the price of a defaultable bond. Understanding the pricing of defaultable bonds is crucial for both investors and financial institutions, as it directly impacts the assessment of credit risk and the management of investment portfolios. Defaultable bonds, unlike risk-free bonds, carry the possibility of issuer default, necessitating sophisticated pricing models to accurately reflect their risk and return characteristics. The practical implications of defaultable bond pricing are vast. Accurate pricing models enable market practitioners to price and manage bond portfolios effectively, mitigating potential losses due to credit events. Furthermore, they support the development of hedging strategies that protect against the financial impacts of defaults. Additionally, these models aid in stress-testing financial institutions' capital positions under adverse economic conditions. For instance, a model that values defaultable bonds based on macroeconomic variables such as foreign exchange rates can reveal how an institution's capital might be affected during market stress \citep{lo2000stress}. 

Various sophisticated models have been developed to price defaultable bonds. Early work by \citep{Duffie1999} is part of the reduced-form models family, also known as intensity-based models, to which our work also belongs. These models treat default as a random process with a specified hazard rate or intensity. They do not explicitly model the firm's assets but instead focus on the likelihood of default over time. Intensity-based models are particularly useful for capturing the term structure of defaultable bonds. They are often contrasted with structural models, such as the Merton model, which are based on the firm's asset value and its volatility. In structural models, the default event is typically modeled as occurring when the firm's asset value falls below a certain threshold (default barrier). This type of model was used by \citep{Chen2009} to price defaultable bonds. More recently, \citep{Russo2020} found closed-form solutions for pricing bonds under a two-factor Gaussian model. Structural models provide a different perspective by focusing on the firm's financial health and its impact on default risk. 

The interest of our work regarding the pricing of defaultable bonds is twofold. First, the approach we develop does not introduce any complexity in the practical use of the model; the expression of the bond price is analytical and the calibration procedure is also provided, as it is the same as the one that will be introduced for the credit spread model. Second, to our knowledge, this is the first analytical expression of the defaultable bond price under a CIR++ intensity framework.

The primary challenge practitioners encounter when attempting to employ a credit spread model is the absence of a widely used and popular model. Unlike for interest rate modeling where models such as the Hull-White model \citep{Hull1993}, its two-factor extension \citep{Hull1994}, the Cox-Ingersoll-Ross model \citep{cox1985theory}, or the historical Vasicek model \citep{Vasicek1977} along with a list of other models are known to be very popular and were the subject of extensive studies analyzing their behavior and their performance (see for instance, \cite{Chan1992}), there does not seem to be one or many commonly used models for credit spreads. What made these models popular was their ease of use and effectiveness; our aim is to achieve the same: simplicity and efficiency.

The literature is rich with work on how to understand or forecast credit spreads. Most of the models in the literature seem to be factorial, meaning that they aim at explaining the variations of credit spreads via other variables. Such models have many advantages: they are, at least in the modeling part, easier to manipulate and use because, in general, the models are linear, their calibration is, in general, automatic (from factorial regressions or autoregressive models), and the interpretation of the model is more convenient as we deal with real meaningful indicators. However, they have drawbacks that motivate, for us, the use of SDE driven model: in general they do not allow native term structures, they do not allow continuous-time dynamics like SDEs, conducting stress tests is less flexible as we have with SDEs many native features (for instance, stress testing via model parameters, typically model volatility, introduction of jumps or RW stress). It is worth noting that the native term structure in SDE-driven models allows for the continuous monitoring of the indicator being considered, over different maturities, providing a more granular view. This feature is particularly useful when entities need to evaluate potential shocks or movements at various points on said indicator. Therefore, it is a common requirement in risk management as well as in pricing issues.

\citep{spreadsdeterminant} for instance, conduct a study using 85 years of corporate bonds data to find determinants of credit spreads. The author uses a self-extracting threshold model with the inflation characterizing the threshold, and the factors used are the levels series for the spread, US treasury bill (T-bill), equity and industrial production variables. \citep{spreadseurobond} also uses autoregressive models that test both ARCH and GARCH models. The author uses FTSE returns, the term spread, the US dollar mark to the sterling pound and previous credit spreads values to describe CS. In \citep{changeinCS}, the authors explain 67\% of the variation of credit spreads using two types of indicators, common factors, they are used by all entities and include equity market return, changes in 5Y government rates, and company-level factors that include stock momentum or change in equity volatility. An interesting feature of their process is that they include in the set of common variables a dummy variable that is 1 when the US Federal funds rate increases and 0 otherwise. They find that this variable is significant only for the highest-rated counterparties where expansionary Fed policies (i.e., decreases in the rate value) reduce credit spreads. 

One reason for the popularity of factorial model is probably the so-called "credit spread puzzle", which states that structural approach to credit risk and corporate bond pricing (like Merton, \citep{mertonspread}) underestimates the credit spread for investment-grade counterparties. In \citep{mythbuster}, the author introduces a new calibration methodology to use the Black-Cox model leading to more precise estimates of investment grade default probabilities. One key feature of the authors work is that instead of relying solely on the historical default rate for a specific maturity and credit rating as a proxy for default probability at that exact maturity and rating, they opt to utilize a wide cross section of default rates at different maturities and ratings. The primary characteristic of their approach that enables them to consolidate default rate data across various credit ratings and maturity periods is the underlying assumption that companies, regardless of their credit ratings and the maturity of their bonds, will still adhere to a common default threshold or boundary. An emphasis will therefore be put on back-testing the model and its calibration.  

Additionally, authors have introduced tree-based models for credit spreads or credit spread-related derivatives pricing (see for example, \citep{tree1, tree2, tree3}). 

Like in this work, in \citep{factorhazard}, the authors also use default intensity to obtain credit spreads. However, the two approaches are quite different. We do not require in our model an explicit derivation of the default intensity as the model is based on the assumption that it can be described by a CIR++ model (just as we assume that instantaneous rates can be described by G2++,  Vasicek, \dots to obtain zero-coupon term structures). In the authors' work, the default intensity is expressed as a function of interest rates and market value of the assets of the firm. This requires the practitioner to hold information that is neither easily available for all types of counterparties (it might be accessible for large companies and some sovereigns, but not for most sub-sovereigns/regional governments and local authorities, small corporates, \dots) nor easily proxied. It also requires having a model for interest rates and the market value of companies' assets.
To our knowledge, this is the first modeling of credit spreads using default intensities described by a CIR++ model. 

In what follows, we will start by deriving in Section \ref{sec:BondPricing} the risky bond price. Then, in Section \ref{sec:RNPmodel} we obtain the expression for the credit spread under the risk neutral measure $\mathbb{Q}$. Section \ref{sec:calib} then focuses on the calibration of the model. We discuss three calibration approaches before exploring in depth the one more suited for risk management. Section \ref{sec:mcrnp} illustrates the propagation of credit spreads using the model. Finally in Section \ref{sec:backtest} we perform the back-testing of the credit spread model. 

\section{Defaultable Bond Pricing under the CIR++ Intensity}
\label{sec:BondPricing}
The aim of this work is to provide practitioners with a stochastic model able to reliably reproduce and propagate the term structure of credit spreads. The proposed approach consists of two main steps:
\begin{itemize}
    \item Modeling default intensities using the CIR++. 
    \item Deriving the corresponding expression that links default intensities to the term structure of credit spreads. 
\end{itemize}
The search for an analytical expression for credit spreads led to the derivation of a closed-form expression for the prices of defaultable bonds.

\paragraph{Model setting}\mbox{}\\
In everything that follows, let $\tau$ be the instant of default and $T>0$ the time horizon. The risk neutral model is set in a filtered probability space $\left(\Omega, \mathcal{F}, (\mathcal{F}_t)_{t \in[0,T]}, \mathbb{Q} \right)$, $(\mathcal{F}_t)_{t \in[0,T]}$ is the natural filtration of the default-less market, $\mathbb{Q}$ defines the risk neutral probability space and $Q(\cdot)$ the probability measure. Let $r(t)=r_t$ be the instantaneous risk-free rate of the market and $\lambda(t)=\lambda_t$ the default intensity. The default intensity process $(\lambda_t)_{t\in[0,T]}$ and the interest rates process $(r_t)_{t\in[0,T]}$ are $\mathcal{F}_t$-measurable. They are also taken to be $\mathcal{F}_t$-conditionally independent (see hypotheses \eqref{hyp:H3} and \eqref{hyp:H4}). Let $\mathcal{G}_t$ be the continuously enlarged filtration that includes default information, $\mathcal{G}_t=\mathcal{F}_t\vee\sigma(\{\tau<s\},\;s\leq t)$. This setup is similar to others from the literature (see for example, \cite{chiarella2011modelling, jeanblanccir}).
\paragraph{Density Hypothesis and Immersion property}\mbox{}\\
We assume that for any time $t \geq 0$, the density of $\tau$ conditionally to $\mathcal{F}_t$ exists and is defined at any time $t$ by an $\mathcal{F}_t \otimes \mathcal{B}(\mathbb{R}^+)$- measurable function $(\omega, \tilde{t}) \longrightarrow \alpha_t(\tilde{t})$. Consequently, for $f$, a bounded Borel function: 
$$
\mathbb{E}\left[f(\tau) \mid \mathcal{F}_t\right]=\int_0^{\infty} f(u) \alpha_t(u) \mathrm{d} u, \quad \text { a.s. }.
$$
This assumption is known as the \textit{density hypothesis}, see for instance \cite{jiao}. We consider the particular case in which the reference filtration, observed after the default time, does not provide any additional insight into the conditional distribution of the default. This translates into the following equality for the conditional density function:
\begin{equation}
    \label{hyp:Hd}
    \tag{H1}
    \alpha_t(\tilde{t})=\alpha_{\tilde{t}}(\tilde{t}), \quad \forall t \geq \tilde{t} \quad \text{ a.s. },
\end{equation}
Furthermore, it is well known that under hypothesis \eqref{hyp:Hd}, the \textit{immersion property}, also known as the \textit{H-hypothesis} holds. That is, for any fixed t and any bounded $\mathcal{G}_t$-random variable $Y_t$, 
\begin{equation}
    \label{hyp:H2}
    \tag{H2}
    \mathbb{E}\left[Y_t \mid \mathcal{F}_{\infty}\right]=\mathbb{E}\left[Y_t \mid \mathcal{F}_t\right] \quad \text { a.s. },
\end{equation}
See for instance equations (11) and (12) in \cite{jiao} where a similar setup to the one described in this paragraph is presented.
\paragraph{Independence hypotheses}\mbox{}\\
We make the following hypotheses regarding the independence between interest rates and the moment of default $\tau$:
\begin{equation}
    \label{hyp:H3}
    \tag{H3}
    \exp \left(-\int_t^T r_s d s\right) \text{ is } \mathcal{F}_t\text{-conditionally independent of } \mathbbm{1}_{\{\tau>T\}},
\end{equation}
\begin{equation}
    \label{hyp:H4}
    \tag{H4}
    \exp\left(-\int_t^T r_s d s\right) \text{ is $\mathcal{F}_t$-conditionally independent of } \alpha_T(u), \; u>t.
\end{equation}

\paragraph{Default Intensity}\mbox{}\\
The default intensity, also called in the literature hazard rate, $\lambda(t)$ verifies:
\begin{equation}
    \label{eq:di_CSPRN}
    \lambda(t) dt = Q\left(\tau \in[t, t+d t] \mid \tau>t, \mathcal \mathcal{F}_t\right).
\end{equation}
Equation\eqref{eq:di_CSPRN} means that $\lambda(t)dt$ is the probability of default in $[t,\,t+dt]$. The result is derived from intensity models or reduced-form models that are in finance models that do not try to explain reason behind defaults (unlike factorial models, for instance). In such models, we often assume that default events correspond to jumps of inhomogeneous Poisson process where the previous relation is verified. That is to say, $\tau = \operatorname{inf}\{ t\geq 0, \Lambda(0,t)\geq \Theta \}$ where $\Theta$ is a unit exponentially distributed random variable. We also introduce the cumulative hazard rate that is defined by: 
\begin{equation}
    \label{eq:cumhr_CSPRN}
    \Lambda(0,t) = \int_0^t{\lambda_u du}.
\end{equation}

\paragraph{CIR++ Model}\mbox{}\\
The dynamic of the CIR++ model is given by:
\begin{equation}
    \label{eq:cirrnp_CSPRN}
    \begin{cases}
        & \lambda(t) = y(t) + \psi(t)  \\
        & dy(t) = \kappa\left(\theta-y(t)\right)dt + \sigma\sqrt{y(t)}dW_t,
    \end{cases} 
\end{equation}
where $(W_t)_{t\in[0,T]}$ is $\mathbb{Q}$-Brownian motion, $\kappa, \theta, \sigma>0$. The Feller condition, $2\kappa \theta \geq \sigma^2$ and $y(0) = y_0>0$ guarantees $y(t)=y_t>0$ under $\mathbb{Q}$. $\psi(t)$ is the deterministic function that we will use to fit the initial market-implied\footnote{We present two solutions to imply the survival probabilities, the more common solution using CDS prices (Section \ref{sec:calib}) and a solution specific to our model using credit spreads (Section \ref{sec:mcrnp})} survival probabilities, it will be explicited later on. Using a CIR-type model to diffuse default intensity is quite convenient, as we can easily ensure positive values. It actually is a common practice (see for instance, \citep{cirint} or more recently, \citep{cirint2}). When using the CIR model to diffuse default intensity, it is called the CIR intensity model. 
Many important works have been done regarding the relations between the CIR++ intensity and the classic CIR++ model for interest rates. The main result we will use is the equivalence between the survival probability, $S(\cdot,\cdot)$ in an intensity model and the zero-coupon bond price $P(\cdot,\cdot)$ in short rate models. This is because on one hand, as we already know about interest rates models, $P(0, t)=\mathbb{E}^{\mathbb{Q}}\left(e^{-\int_0^t r(u) d u}\right)$. On the other hand, it is quite straightforward to show that $S(0,t)=Q(\tau>t)=\mathbb{E}\left[e^{-\int_0^t \lambda(u) d u}\right]$. This means that both quantities can be written as Laplace transform of CIR-type processes. More detailed explanations on the meaning of the equivalence and the way to prove it can be found in sections 21.1.1.1 and 22.7.2 of \citep{brigo2006interest}. The price of the risk-free zero-coupon bond is:
\begin{equation}
    \label{eq:riskfreebond}
    P(t, T)=\mathbb{E}^{\mathbb{Q}}\left[\exp \left(-\int_t^T r_s d s\right) \bigg\vert \mathcal{F}_t\right],
\end{equation}
and the survival probability is given by:
\begin{equation}
    \label{eq:survprob}
    S(t, T)=\mathbb{E}^{\mathbb{Q}}\left[\exp \left(-\int_t^T \lambda_s d s\right) \bigg\vert \mathcal{F}_t\right].
\end{equation}

\paragraph{Defaultable bond price}\mbox{}\\
We recall that in our setup we consider, $(r_t)_{0\le t\le T}$ to be a $(\mathcal{F}_t)_{0\le t\le T}$-measurable process that is $\mathcal{F}_t$-conditionally independent of the default intensity, and therefore $\tau$-defined process.  
Then, the fair-price of a risky-bond (defaultable bond) with recovery rate $\delta\in(0,1)$ is
\begin{equation*}
    H(t, T)=\mathbb{E}^{\mathbb{Q}}\left[\exp \left(-\int_t^T r_s d s\right)\left(\delta \mathbbm{1}_{\{\tau \leq T\}}+\mathbbm{1}_{\{\tau>T\}}\right) \bigg\vert \mathcal{G}_t\right].
\end{equation*}
It is important to condition with respect to $\mathcal{G}_t$, as the bond's issuer may default.

\begin{theorem}
    \label{theo:ExpDefBond}
    Under the previous notations, and under hypotheses (H1-4), the price of the defaultable bond price of maturity $T$ at time $t$, $H(t,T)$ as a function of the risk-free zero-coupon bond price $P(t,T)$ is given by: 
    \begin{equation}
    \label{eq:defbondfinal}
        \begin{dcases}
            H(t, T) = P(t,T)\left[\delta+(1-\delta) \frac{S^m(0,T)}{S^m(0,t)}\frac{A(0,t)}{A(0,T)}\frac{e^{-B(0,t)y_0}}{e^{-B(0,T)y_0}}A(t,T)e^{-B(t,T)[\lambda(t)-\psi(t)]} \right]\\
            \psi(t) = \lambda^m(t)+D(t)-y_0E(t), 
        \end{dcases}
    \end{equation}
    where, 
    \begin{equation*}
    \begin{aligned}
        A(t,T) &= \left( \frac{2he^{\frac{1}{2}(\kappa+h)(T-t)}}{2h+(\kappa+h)\left( e^{(T-t)h} - 1 \right )} \right)^{\frac{2\kappa\theta}{\sigma^2}}\text {, }
        B(t,T) = \frac{2\left( e^{(T-t)h} - 1 \right )}{2h+(\kappa+h)\left( e^{(T-t)h} - 1 \right )}\\
        h &= \sqrt{\kappa^2+2\sigma^2}\\
        D(t) &= \frac{d}{dt}ln(A(0,t))\text{, }E(t) = \frac{d}{dt}B(0,t),
    \end{aligned}
\end{equation*}
$S^m(0,t)$ is the initial market's implied survival probability, for horizon $t$ and $\lambda^m(t)$ denotes the initial market default intensity for the horizon $t$.
\end{theorem}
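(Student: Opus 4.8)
The plan is to establish the statement in three stages: (i) a filtration-reduction step that expresses $H(t,T)$ through the risk-free bond $P(t,T)$ and the survival probability $S(t,T)$; (ii) an affine-term-structure computation of $S(t,T)$ under the CIR++ dynamics; and (iii) the identification of the shift $\psi$ from the requirement that the model reproduces the initial market survival curve.

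For step (i), I would first decompose the terminal payoff as $\delta\mathbb{1}_{\{\tau\le T\}}+\mathbb{1}_{\{\tau>T\}}=\delta+(1-\delta)\mathbb{1}_{\{\tau>T\}}$, so that on the pre-default set $\{\tau>t\}$
\[
H(t,T)=\delta\,\mathbb{E}^{\mathbb{Q}}\!\left[e^{-\int_t^T r_s\,ds}\mid\mathcal{G}_t\right]+(1-\delta)\,\mathbb{E}^{\mathbb{Q}}\!\left[e^{-\int_t^T r_s\,ds}\mathbb{1}_{\{\tau>T\}}\mid\mathcal{G}_t\right].
\]
The $\mathcal{F}_t$-conditional independence of $(r_s)$ and $\tau$ removes the default component of $\mathcal{G}_t$ from the first term, leaving $\delta P(t,T)$. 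For the second term I would use the standard \emph{key lemma} for enlarged filtrations in reduced-form credit modeling: for $\mathcal{F}_T$-measurable $Y$, $\mathbb{E}[\mathbb{1}_{\{\tau>T\}}Y\mid\mathcal{G}_t]=\mathbb{1}_{\{\tau>t\}}\,\mathbb{E}[\mathbb{1}_{\{\tau>T\}}Y\mid\mathcal{F}_t]\big/\mathbb{Q}(\tau>t\mid\mathcal{F}_t)$. Combining this with the intensity representation $\mathbb{Q}(\tau>s\mid\mathcal{F}_s)=\exp(-\int_0^s\lambda_u\,du)$ and using the $\mathcal{F}_t$-conditional independence of $(r_s)$ and $(\lambda_s)$ to factorize, the expectation collapses to $P(t,T)S(t,T)$, so that $H(t,T)=P(t,T)\big[\delta+(1-\delta)S(t,T)\big]$ on $\{\tau>t\}$. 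I expect this reduction to be the main (conceptual) obstacle; everything afterward is computation.

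For step (ii), since $\lambda_u=y_u+\psi(u)$ with $\psi$ deterministic, $S(t,T)=\exp\!\big(-\int_t^T\psi(u)\,du\big)\,\mathbb{E}^{\mathbb{Q}}\!\big[\exp(-\int_t^T y_u\,du)\mid\mathcal{F}_t\big]$, and the remaining conditional expectation is the classical CIR zero-coupon price $A(t,T)e^{-B(t,T)y_t}$, which I would obtain by solving the associated Riccati system (this yields exactly the stated $A$, $B$ and $h=\sqrt{\kappa^2+2\sigma^2}$; cf.\ \citep{brigo2006interest}). Since $y_t=\lambda(t)-\psi(t)$ this gives $S(t,T)=\exp\!\big(-\int_t^T\psi(u)\,du\big)A(t,T)e^{-B(t,T)[\lambda(t)-\psi(t)]}$.

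For step (iii), evaluating this identity at $t=0$ and imposing the calibration constraint $S(0,t)=S^m(0,t)$ for all $t$ yields $\exp\!\big(-\int_0^t\psi(u)\,du\big)=S^m(0,t)\big/\big(A(0,t)e^{-B(0,t)y_0}\big)$; forming the ratio of this identity at $T$ and at $t$ gives $\exp\!\big(-\int_t^T\psi(u)\,du\big)=\frac{S^m(0,T)}{S^m(0,t)}\frac{A(0,t)}{A(0,T)}\frac{e^{-B(0,t)y_0}}{e^{-B(0,T)y_0}}$. Substituting this into the expression for $S(t,T)$ and then into $H(t,T)=P(t,T)[\delta+(1-\delta)S(t,T)]$ delivers the boxed formula for $H(t,T)$. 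For the companion formula, I would take logarithms in the calibration identity and differentiate in $t$, using $\frac{d}{dt}\ln S^m(0,t)=-\lambda^m(t)$ together with $D(t)=\frac{d}{dt}\ln A(0,t)$ and $E(t)=\frac{d}{dt}B(0,t)$, to arrive at $\psi(t)=\lambda^m(t)+D(t)-y_0E(t)$.
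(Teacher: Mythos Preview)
Your proposal is correct and follows essentially the same route as the paper: the payoff decomposition $\delta+(1-\delta)\mathbb{1}_{\{\tau>T\}}$, the key lemma (switching-of-filtration formula) to reduce from $\mathcal{G}_t$ to $\mathcal{F}_t$, the CIR affine formula for the $y$-part of the survival probability, and the calibration identity for $\psi$. The only cosmetic difference is that the paper quotes the CIR++ bond-price expression for $S(t,T)$ directly from \cite{brigo2006interest}, whereas you first isolate the deterministic factor $\exp(-\int_t^T\psi)$ and then recover the $S^m$-ratios via the calibration step; this makes the origin of the market-quotient prefactor more transparent but is otherwise the same argument.
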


Note that both $\lambda^m(t)$ and $S^m(0,t)$ are usually inferred from CDS prices. In the paragraph \textit{Data Set} of Subsection \ref{sec:calib} and in the paragraph \textit{Fitting the initial market term structure} of Subsection \ref{sec:mcrnp}, we provide more details on how they are obtained.

\begin{proof}
At first, we write
\begin{equation}
    \label{eq:risky0}
    \begin{aligned}
        &\begin{aligned}
        H(t, T) 
        & =\delta \mathbb{E}^{\mathbb{Q}}\left[\exp \left(-\int_t^T r_s d s\right) \mathbbm{1}_{\{\tau \leq T\}} \bigg\vert \mathcal{G}_t\right]+\mathbb{E}^{\mathbb{Q}}\left[\exp \left(-\int_t^T r_s d s\right) \mathbbm{1}_{\{\tau>T\}} \bigg\vert \mathcal{G}_t\right] =: H_1+H_2. \\
        \end{aligned}
    \end{aligned}
\end{equation}
For $H_1$, we have 
\begin{eqnarray}
 \label{eq:risky1}
    H_1 
     &=&\delta \mathbb{E}^{\mathbb{Q}}\left[\exp \left(-\int_t^T r_s d s\right) \bigg\vert \mathcal{G}_t\right]-\delta \mathbb{E}^{\mathbb{Q}}\left[\exp \left(-\int_t^T r_s d s\right)\left(\mathbbm{1}_{\{\tau>T\}}\right) \bigg\vert \mathcal{G}_t\right]\notag\\
      &=&\delta \mathbb{E}^{\mathbb{Q}}\left[\exp \left(-\int_t^T r_s d s\right) \bigg\vert \mathcal{G}_t\right]-\delta H_2.
\end{eqnarray}
Using Theorem 3.1 of \cite{jiao}, we have:
\begin{equation*}
    \begin{aligned}
        H_{11} &= \mathbb{E}^{\mathbb{Q}}\left[\exp \left(-\int_t^T r_s d s\right) \bigg\vert \mathcal{G}_t\right]\\
        &=H^{bd}_{{11}_t}\mathbbm{1}_{\{t<\tau\}}+H^{ad}_{{11}_t}\mathbbm{1}_{\{\tau\leq t\}}.
    \end{aligned}
\end{equation*}
$H^{bd}_{{11}_t}$ and $H^{ad}_{{11}_t}$ respectively define the before default and after default components. Under hypothesis \eqref{hyp:H2}, the immersion property, we have (see Subsection 3.1 of \cite{jiao}), 
\begin{equation}
    \label{eq:Had}
    H^{ad}_{{11}_t} = \mathbb{E}^{\mathbb{Q}}\left[\exp \left(-\int_t^T r_s d s\right) \bigg\vert \mathcal{F}_t\right] = P(t,T).
\end{equation}
Then for the component before default we have: 
\begin{equation}
    \label{eq:Hbd}
    H^{bd}_{{11}_t}=\frac{\mathbb{E}\left[\int_t^{\infty} {\exp\left(-\int_t^T r_s d s\right) \alpha_T(u) du} \bigg\vert \mathcal{F}_t\right]}{Q\left(\tau>t \mid F_t\right)}.
\end{equation}
From hypothesis \eqref{hyp:H4} we write:
\begin{equation*}
    \begin{aligned}
    H^{bd}_{{11}_t}& =\frac{\mathbb{E}\left[\int_t^{\infty} {\exp\left(-\int_t^T r_s d s\right) \alpha_T(u) du} \bigg\vert \mathcal{F}_t\right]}{Q\left(\tau>t \mid F_t\right)}\\
    &=\frac{\mathbb{E}\left[\exp\left({-\int_t^T r_s d s}\right) \bigg \vert \mathcal{F}_t\right] \times \mathbb{E}\left[\int_t^{\infty} \alpha_T(u) du  \bigg \vert \mathcal{F}_t\right]}{Q\left(\tau>t \mid \mathcal{F}_t\right)} \\
    &=\frac{P(t, T) \times \mathbb{E}\left[Q\left(\tau>t \mid \mathcal{F}_T\right) \mid \mathcal{F}_t\right]}{Q\left(\tau>t \mid \mathcal{F}_t\right)}\\
    \end{aligned}
\end{equation*}
\begin{equation}
    \label{eq:Hbd}
    \hspace{-7.cm}H^{bd}_{{11}_t}=P(t,T).
\end{equation}
Therefore from Equations \eqref{eq:Had} and \eqref{eq:Hbd}, it follows that $H_{11} = \mathbb{E}^{\mathbb{Q}}\left[\exp \left(-\int_t^T r_s d s\right) \bigg\vert \mathcal{G}_t\right]=P(t,T)\mathbbm{1}_{\{t<\tau\}}+P(t,T)\mathbbm{1}_{\{\tau\leq t\}}=P(t,T)$.
We then have $H(t,T)=\delta P(t, T)+(1-\delta)H_2$.
For $H_2$ we  recall that from the switching filtration theorem (see e.g. \cite[Section 22.5]{brigo2006interest}), for any $\mathcal{G}_{T}$-measurable $Z$ we have
\begin{equation*}
\mathbb{E}\left(\mathbbm{1}_{\{\tau>T\}} Z \mid \mathcal{G}_t\right)=\frac{\mathbbm{1}_{\{\tau>t\}}}{Q\left(\tau>t \mid \mathcal{F}_t\right)} \mathbb{E}\left(\mathbbm{1}_{\{\tau>T\}} Z \mid \mathcal{F}_t\right).
\end{equation*}
Using this result on $H_2$ leads to:
\begin{equation*}
H_2=\mathbb{E}^{\mathbb{Q}}\left[\exp \left(-\int_t^T r_s d s\right) \mathbbm{1}_{\{\tau>T\}} \bigg\vert \mathcal{F}_t\right]  \frac{\mathbbm{1}_{\{\tau>T\}}}{Q\left(\tau>t \mid \mathcal{F}_t\right)}.
\end{equation*}
Since $\exp \left(-\int_t^T r_s d s\right)$ is $\mathcal{F}_t$-conditionally independent of $\mathbbm{1}_{\{\tau>T\}}$ (hypothesis \eqref{hyp:H3}) and using again $\mathbb{E}^{\mathbb{Q}}\left[\exp \left(-\int_t^T r_s d s\right) \bigg\vert \mathcal{F}_t\right]=P(t, T)$, we write: 
\begin{equation*}
    \begin{aligned}
    H_2 & =P(t, T)  \mathbbm{1}_{\{\tau>T\}} \frac{\mathbb{E}^{\mathbb{Q}}\left[\mathbbm{1}_{\{\tau>T\}} \mid \mathcal{F}_t\right]}{Q\left(\tau>t \mid \mathcal{F}_t\right)} \\
    & =P(t, T)  \mathbbm{1}_{\{\tau>T\}}  \frac{Q\left(\tau>T \mid \mathcal{F}_t\right)}{Q\left(\tau>t \mid \mathcal{F}_t\right)}.
    \end{aligned}
\end{equation*}
For the first time in our process, we now use the fact that we are in the context of intensity models $Q(\tau>t)$ is the survival probability and we know that $Q\left(\tau>T \mid \mathcal{F}_t\right)=\mathbb{E}^{\mathbb{Q}}\left[\exp \left(-\int_0^T \lambda_s d s\right) \vert \mathcal{F}_t\right]$. It is well known that:
\begin{equation*}
\frac{Q\left(\tau>T \mid \mathcal{F}_t\right)}{Q\left(\tau>t \mid \mathcal{F}_t\right)}=\mathbb{E}^{\mathbb{Q}}\left[\exp \left(-\int_t^T \lambda_s d s\right) \bigg \vert \mathcal{F}_t\right],
\end{equation*}
which implies that
\begin{equation}
    \label{eq:risky3}
    H_2=P(t, T) \mathbbm{1}_{\{\tau>T\}} \mathbb{E}^{\mathbb{Q}}\left[\exp \left(-\int_t^T \lambda_s d s\right) \bigg \vert \mathcal{F}_t\right].
\end{equation}
For instance, this is equivalent to result (3.3) of \citep{jeanblanccir}. Assuming that the default does not occur in within our observation window we take $\mathbbm{1}_{\tau>T} = 1$, we find the expression of the risky bond price in a intensity modelisation context.
We draw the reader's attention to the fact that, so far, we have not utilized the CIR++ model.
\begin{equation}
\label{eq:HtPt}
    H(t, T)=P(t, T)\left\{\delta+(1-\delta) \mathbb{E}^{\mathbb{Q}}\left[\exp \left(-\int_t^T \lambda_s d s\right) \bigg \vert \mathcal{F}_t\right]\right\}.
\end{equation}
The only component that remains to be determined is $\mathbb{E}^{\mathbb{Q}}\left[\exp \left(-\int_t^T \lambda_s ds\right) \Bigg\vert \mathcal{F}_t\right]$. Its expression can be directly obtained through the identification we have already made between the CIR++ intensity model and the CIR++ model for short rates. Since the survival probability (described in our model by the expectation above) is equivalent to the zero-coupon bond price, they share the same expression. We recall that for interest rate modeling under the CIR++, we have the following dynamic for the short-term interest rates:
\begin{equation*}
    \begin{cases}
        & r(t) = x(t) + \varphi(t)  \\
        & dx(t) = \kappa\left(\theta-x_t\right)dt + \sigma\sqrt{x_t}dW_t.
    \end{cases} 
\end{equation*}
The expression of the zero-coupon bond price in a CIR++ is very common in the literature (see for example Chapter 3.9 of \citep{brigo2006interest}) and is given by:
\begin{equation*}
    \mathbb{E}^{\mathbb{Q}}\left[\exp \left(-\int_t^T r_s d s\right) \bigg \vert \mathcal{F}_t\right] = P(t,T)=\frac{P^m(0,T)}{P^m(0,t)}\frac{A(0,t)}{A(0,T)}\frac{e^{-B(0,t)x_0}}{e^{-B(0,T)x_0}}A(t,T)e^{-B(t,T)[r(t)-\varphi(t)]}.
\end{equation*}
In the above, $P^m(0,t)$ is the market quote of the initial zero-coupon bond price maturing at $t$ and we have,
\begin{equation}
    \label{eq:AandB_CSPRN}
    \begin{aligned}
        A(t,T) &= \left( \frac{2he^{\frac{1}{2}(\kappa+h)(T-t)}}{2h+(\kappa+h)\left( e^{(T-t)h} - 1 \right )} \right)^{\frac{2\kappa\theta}{\sigma^2}};\\
        B(t,T) &= \frac{2\left( e^{(T-t)h} - 1 \right )}{2h+(\kappa+h)\left( e^{(T-t)h} - 1 \right )};\\
        h &= \sqrt{\kappa^2+2\sigma^2}.
    \end{aligned}
\end{equation}
Analogously, for our survival probability $S(t,T)$ we write :
\begin{equation}
    \label{eq:surv}
    \mathbb{E}^{\mathbb{Q}}\left[\exp \left(-\int_t^T \lambda_s d s\right) \bigg \vert \mathcal{F}_t\right] = S(t,T)=\frac{S^m(0,T)}{S^m(0,t)}\frac{A(0,t)}{A(0,T)}\frac{e^{-B(0,t)y_0}}{e^{-B(0,T)y_0}}A(t,T)e^{-B(t,T)[\lambda(t)-\psi(t)]}.
\end{equation} 
where $S^m(0,\cdot)$ is the market's implied\footnote{Unlike $P^m(0,t)$, survival probabilities are not usually directly quoted in the market. They must be inferred from relevant instruments such as CDS. This is discussed in more detail in the calibration section (\ref{sec:calib}).} survival probability at $t_0$.

The last step is to find the expression for $\psi(t)$. We have already stated that $\psi(t)$ should allow us to replicate the market's survival probabilities, $S^m(t)$. To achieve this, the model must satisfy $S^m(t) = S(0,t), \forall t \in [0,T]$. Here, we denote the market probability as $S^m(t)$, a uni-dimensional function, instead of $S^m(0,t)$ to emphasize that we are set at the initial time and $S^m$ is therefore a market observation. We can then write:
\begin{equation*}
	\begin{aligned}
		S^m(t)=S(0,t) &=\mathbb{E}^{\mathbb{Q}}\left[ e^{-\int_0^t{\lambda_s ds}} \right]\\
		&=\mathbb{E}^{\mathbb{Q}}\left[ e^{-\int_0^t{y_s + \psi(s) ds}}  \right]\\
& = e^{-\int_0^t{\psi(s) ds}}\mathbb{E}^{\mathbb{Q}}\left[ e^{-\int_0^t{y_s ds}} \right].
	\end{aligned}
\end{equation*}
$\mathbb{E}^{\mathbb{Q}}\left[ e^{-\int_0^t{y_s ds}} \right]$ is the price of a zero-coupon bond under a CIR model (or of the survival probability under a CIR intensity model). The literature (see the original CIR paper, \citep{cox1985theory} for instance) gives $\mathbb{E}^{\mathbb{Q}}\left[ e^{-\int_0^t{y_s ds}} \right] = A(0,t)e^{-B(0,t)y_0}$, so we can write:  
\begin{equation*}
	\begin{aligned}
		S^m(t) &= e^{-\int_0^t{\psi(s) ds}}A(0,t)e^{-B(0,t)y_0}.
	\end{aligned}
\end{equation*}
From our definition of the default intensity, it is natural to write the survival probability as a function of the cumulative hazard rate $S^m(t) = e^{-\Lambda^m(t)}$, where analogously to equation \eqref{eq:cumhr_CSPRN}, $\Lambda^m(t)  = \int_0^t{\lambda^m(u) du}$, where $\lambda^m(t)$ denotes the initial market default intensity for the horizon $t$. Thus, we get:
\begin{equation*}
	\begin{aligned}
	\int_0^t{\psi(s) ds} &= \Lambda^m(t)+\operatorname{ln}(A(0,t))-B(0,t)y_0, 
	\end{aligned}
\end{equation*}
and consequently, 
\begin{equation*}
	\begin{aligned}
 \psi(t) &= \lambda^m(t)+D(t)-y_0E(t)
	\end{aligned}
\end{equation*}
where 
\begin{equation*}
	\begin{aligned}
 D(t) &:= \frac{d}{dt}ln(A(0,t)) = \frac{2\kappa\theta}{\sigma^2}\left[ \frac{1}{2}(\kappa+h)-\frac{h(\kappa+h)e^{th}}{2h + (\kappa+h)(e^{th}-1)} \right]\\
 E(t) &:= \frac{d}{dt}B(0,t) = \frac{4h^2 e^{th}}{\left[ 2h+(\kappa+h)(e^{th}-1) \right]^2}.
	\end{aligned}
\end{equation*}
Finally, combining these results with equations \eqref{eq:HtPt} and \eqref{eq:surv} we have our equation for the price of defaultable bonds:
\begin{equation*}
    \begin{dcases}
        H(t, T) = P(t,T) \left[\delta+(1-\delta) \frac{S^m(0,T)}{S^m(0,t)}\frac{A(0,t)}{A(0,T)}\frac{e^{-B(0,t)y_0}}{e^{-B(0,T)y_0}}A(t,T)e^{-B(t,T)[\lambda(t)-\psi(t)]} \right]\\
        \psi(t) = \lambda^m(t)+D(t)-y_0E(t). 
    \end{dcases}
\end{equation*}
\end{proof}

\paragraph{Comment on the form of the defaultable bond price equation} \mbox{}\\
From equation \ref{eq:HtPt} in the proof of Theorem \ref{theo:ExpDefBond} we have that:
\begin{equation*}
    H(t, T)=P(t, T)\left[\delta+(1-\delta) S(t, T)   \right].
\end{equation*}
This equation allows for an intuitive analysis of the expression. It guarantees that for any given recovery rate, the closer the survival probability is to $1$, the closer the risky bond price $H(t,T)$ is to the risk-free bond price $P(t,T)$. In particular, if the survival probability is $1$, meaning that the risky bond is virtually default-free, then the two bonds have the same price. Additionally, this equation implicitly highlights the risk premium that investors demand for holding a defaultable bond over a risk-free bond, as it ensures that $H(t,T)$ is always lower than $P(t,T)$. To exploit this equation to propagate the term structure of a defaultable bond price, one needs to combine equation \eqref{eq:defbondfinal} with a model for the pricing of the risk-free bond. Specifically, one needs to define the dynamics of the short-term interest rates. For instance, if one assumes the instantaneous rate to be modeled by a Hull-White model, then it is well known \citep{brigo2006interest} that we would have:
$$
dr(t) = [\vartheta(t) - a^rr(t)]dt + \sigma^r dW^r(t),
$$
where \(a^r\) and \(\sigma^r\) are positive constants and \(\vartheta\) is chosen so as to exactly fit the term structure of interest rates being currently observed in the market and $W^r(t)$ is a $\mathbb{Q}-Brownian$ motion. The bond price $P(t,T)$ would be:

$$
P(t,T) = A^{HW}(t,T)e^{-B^{HW}(t,T)r(t)},
$$
where 
$$
\begin{aligned}
B^{HW}(t,T) &= \frac{1}{a} \left[ 1 - e^{-a(T-t)} \right], \\
A^{HW}(t,T) &= \frac{P^m(0,T)}{P^m(0,t)} \exp \left[ B^{HW}(t,T) f^m(0,t) - \frac{\sigma^2}{4a} \left( 1 - e^{-2at} \right) B^{HW}(t,T)^2 \right],
\end{aligned}
$$
and $P^m(0,t)$ is the initial market bond price at time 0.

\section{Modeling the Term Structure of Credit Spreads}
\label{sec:RNPmodel}
Let us now focus on the main aim of this research paper: the term structure of credit spreads. We recall that our modeling framework represents default intensities using a CIR++ model and then finds the equation that defines credit spreads in that context. We maintain the same setting and definitions as previously established. As we have already stated, the price of the risk-free zero-coupon bond is:
\begin{equation*}
    \label{eq:riskfreebond2}
    P(t, T)=\mathbb{E}^{\mathbb{Q}}\left[\exp \left(-\int_t^T r_s d s\right) \bigg\vert \mathcal{F}_t\right].
\end{equation*}

We recall that in our setup we consider, $(r_t)_{0\le t\le T}$ to be a $(\mathcal{F}_t)_{0\le t\le T}$-measurable process, which is $\mathcal{F}_t$-conditionally independent of the default intensity and therefore any $\tau$-defined process. The fair-price of a risky-bond with recovery rate $\delta\in(0,1)$ is defined by
\begin{equation*}
    H(t, T)=\mathbb{E}^{\mathbb{Q}}\left[\exp \left(-\int_t^T r_s d s\right)\left(\delta \mathbb{1}_{\{\tau \leq T\}}+\mathbb{1}_{\{\tau>T\}}\right) \bigg\vert \mathcal{G}_t\right],
\end{equation*} 
and given explicitly by Theorem \ref{theo:ExpDefBond}. 

The credit spread represents the difference between the yield of a risky asset, denoted as $ Z_r(t,T) $, and the yield of a risk-free asset, denoted as $ Z(t,T) $. Typically, the risk-free asset can be government bonds, which are considered to have minimal default risk. In contrast, the risky asset could be a corporate bond or a bond issued by a financial institution such as a bank, which inherently carries a higher default risk. The relationship between the yields and the prices of these assets at time $t$ for a bond maturing at time $T$ is given by the following expressions:
$$
P(t,T) = e^{-(T-t)Z(t,T)}; \; H(t,T) = e^{-(T-t)Z_r(t,T)}.
$$
The exponential function in these classical formulas reflects the continuous compounding of interest rates. The credit spread, $Z_r(t,T) - Z(t,T)$, thus quantifies the additional yield that investors demand for taking on the additional risk associated with the corporate or bank-issued bond compared to a government bond. This spread is a crucial metric in credit risk analysis, reflecting the market's perception of the default risk and the overall creditworthiness of the issuer. Therefore,  the credit spread $\operatorname{Sp(t,T)}$ of maturity $T$ and evaluated at time $t$  is given by
\begin{equation}
    \label{eq:spreaddef}
        Sp(t, T) := Z(t,T)-Z_r(t,T) = -\frac{1}{T-t}\operatorname{log}\left[\frac{H(t,T)}{P(t,T)}\right].
\end{equation}

\begin{theorem}
    \label{theo:ExpSp}
    Under the previous notations, the value of the credit spread of maturity $T$ at time $t$, $Sp(t,T)$ is given by: 
    \begin{equation}
    \label{eq:spfinal}
    \boxed{
        \begin{dcases}
            \operatorname{Sp}(t, T) =-\frac{1}{T-t} \ln \left[\delta+(1-\delta) \frac{S^m(0,T)}{S^m(0,t)}\frac{A(0,t)}{A(0,T)}\frac{e^{-B(0,t)y_0}}{e^{-B(0,T)y_0}}A(t,T)e^{-B(t,T)[\lambda(t)-\psi(t)]} \right]\\
            \psi(t) = \lambda^m(t)+D(t)-y_0E(t), 
        \end{dcases}
    }
    \end{equation}
    where, 
    \begin{equation*}
    \begin{aligned}
        A(t,T) &= \left( \frac{2he^{\frac{1}{2}(\kappa+h)(T-t)}}{2h+(\kappa+h)\left( e^{(T-t)h} - 1 \right )} \right)^{\frac{2\kappa\theta}{\sigma^2}}\text {, }
        B(t,T) = \frac{2\left( e^{(T-t)h} - 1 \right )}{2h+(\kappa+h)\left( e^{(T-t)h} - 1 \right )}\\
        h &= \sqrt{\kappa^2+2\sigma^2}\\
        D(t) &= \frac{d}{dt}ln(A(0,t))\text{, }E(t) = \frac{d}{dt}B(0,t).
    \end{aligned}
\end{equation*}
\end{theorem}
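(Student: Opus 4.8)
The plan is to obtain the credit spread formula as an essentially immediate corollary of Theorem \ref{theo:ExpDefBond} together with the definition \eqref{eq:spreaddef}. First I would recall identity \eqref{eq:HtPt} from the proof of Theorem \ref{theo:ExpDefBond}, which on the event $\{\tau>t\}$ (the relevant case when a spread is quoted, i.e. the issuer has not yet defaulted) reads
\[
\frac{H(t,T)}{P(t,T)} = \delta + (1-\delta)\,\mathbb{E}^{\mathbb{Q}}\!\left[\exp\!\left(-\int_t^T \lambda_s\,ds\right)\Big\vert \mathcal{F}_t\right] = \delta + (1-\delta)\,S(t,T).
\]
Note that, although $H$ is defined conditionally on $\mathcal{G}_t$, the right-hand side is $\mathcal{F}_t$-measurable on $\{\tau>t\}$, so the ratio — hence the spread — is well defined there. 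Into this I would substitute the closed-form CIR++ expression \eqref{eq:surv} for $S(t,T)$ and the expression for $\psi(t)=\lambda^m(t)+D(t)-y_0E(t)$ derived at the end of that proof.

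Second, I would apply the definition \eqref{eq:spreaddef} verbatim: since $\operatorname{Sp}(t,T) = -\frac{1}{T-t}\log\!\big[H(t,T)/P(t,T)\big]$, plugging in the ratio above yields exactly the boxed formula \eqref{eq:spfinal}, with the functions $A(\cdot,\cdot)$, $B(\cdot,\cdot)$, $h$, $D(\cdot)$, $E(\cdot)$ carried over unchanged from Theorem \ref{theo:ExpDefBond}. No new computation is required; the analytic work is entirely front-loaded into the previous theorem.

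Third — and this is the only point deserving a remark rather than a genuine difficulty — I would verify that the logarithm is well defined, i.e. that the bracket $\delta + (1-\delta)S(t,T)$ is strictly positive (indeed lies in $(\delta,1)$). This holds because $\delta\in(0,1)$ and, under the Feller condition $2\kappa\theta\ge\sigma^2$ with $y_0>0$, the process $y_t$ stays strictly positive under $\mathbb{Q}$, so $S(t,T)\in(0,1)$; as a byproduct this re-derives the sign property $0\le\operatorname{Sp}(t,T)$ (equivalently $H(t,T)\le P(t,T)$). I expect no substantive obstacle: the statement is a one-line logarithmic transformation of Theorem \ref{theo:ExpDefBond}.
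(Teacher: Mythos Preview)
Your proposal is correct and follows essentially the same route as the paper: recall the identity $H(t,T)=P(t,T)\bigl[\delta+(1-\delta)S(t,T)\bigr]$ from the proof of Theorem~\ref{theo:ExpDefBond}, apply the definition \eqref{eq:spreaddef}, and substitute the closed form \eqref{eq:surv} for $S(t,T)$ together with the expression for $\psi(t)$. Your additional checks (restriction to $\{\tau>t\}$ and positivity of the bracket via the Feller condition) are fine clarifications but not part of the paper's argument.
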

$S^m(0,t)$, $A(t,t)$, $B(t,T)$, $D(t,T)$, $E(t,T)$, $\lambda^m(t,T)$ and $S^m(0,t)$ are all specified in Theorem \ref{theo:ExpDefBond}.

\begin{proof}
Once the proof of Theorem \ref{theo:ExpDefBond} is given, the proof of Theorem \ref{theo:ExpSp} is quite straightforward. Indeed, on one hand, since
\begin{equation*}
    H(t, T)=P(t, T)\left\{\delta+(1-\delta) \mathbb{E}^{\mathbb{Q}}\left[\exp \left(-\int_t^T \lambda_s d s\right) \bigg \vert \mathcal{F}_t\right]\right\},
\end{equation*}

we naturally have
\begin{equation*}
    \label{eq:sp}
        \operatorname{Sp}(t, T)=-\frac{1}{T-t} \ln \left[\delta+(1-\delta) \mathbb{E}^{\mathbb{Q}}\left[\exp \left(-\int_t^T \lambda_s d s\right) \bigg \vert \mathcal{F}_t\right]\right].
\end{equation*}

On the other hand, we had already justified for Theorem \ref{theo:ExpDefBond} that the survival probability analytical expression in our setup was:
\begin{equation*}
    \mathbb{E}^{\mathbb{Q}}\left[\exp \left(-\int_t^T \lambda_s d s\right) \bigg \vert \mathcal{F}_t\right] = S(t,T)=\frac{S^m(0,T)}{S^m(0,t)}\frac{A(0,t)}{A(0,T)}\frac{e^{-B(0,t)y_0}}{e^{-B(0,T)y_0}}A(t,T)e^{-B(t,T)[\lambda(t)-\psi(t)]}.
\end{equation*} 

The above leads to our equation for the diffusion of the term structure of credit spreads:
\begin{equation*}
    \begin{dcases}
        \operatorname{Sp}(t, T) =-\frac{1}{T-t} \ln \left[\delta+(1-\delta) \frac{S^m(0,T)}{S^m(0,t)}\frac{A(0,t)}{A(0,T)}\frac{e^{-B(0,t)y_0}}{e^{-B(0,T)y_0}}A(t,T)e^{-B(t,T)[\lambda(t)-\psi(t)]} \right]\\
        \psi(t) = \lambda^m(t)+D(t)-y_0E(t). 
    \end{dcases}
\end{equation*}
\end{proof}

\section{Model Calibration}
\label{sec:calib}
Once our credit spread model is well established, it seems natural to develop the associated calibration approach. A model can only be as good as its parameters are well chosen to be representative of a specific context. 
Many calibration approaches exist in the literature (see e.g. \cite{}). An efficient approach that has already been tested on CIR intensity model can be the deep calibration developed in \citep{dc}. We will introduce another calibration approach as suitable  as the deep calibration but with a focus on a risk management perspective. 

\paragraph{Calibration Process}\mbox{}\\
A natural use of this model will be asset fair-value calculations, either for pricing issues or for market risk management. In the first case, the intuitive calibration might be to fit the last few credit spreads values or to fit market prices of some assets. This idea would be consistent with most calibration processes for interest rates models, where an indeed common practice is to fit the prices of some interest rate derivatives (see for example \citep{hull2001general} or more recently \citep{russo2019calibration}). However for a more risk assessment oriented calibration, finding the best parameters to only fit a set of given values may not guarantee the ability to cover all the historical variations of a credit spreads. Indeed, in this latter case, a conservative practice is to make sure the model is representative of the widest variations of credit spreads observed in the market. These variations are related to the historical volatility of credit spreads. We therefore choose a tractable approach that consists in calibrating on the historical volatility of the market default intensity curve. This choice is all the more relevant as the literature has already established the fact that in a CIR++ intensity configuration, the default intensity can be seen as an instantaneous credit spread (see for example Chapter 21.1 of \cite{brigo2006interest}). The model squared volatility (which is the variance) of the default intensity is given by: 
\begin{equation}
    \label{eq:varlambda}
    \operatorname{Var}_{\lambda}(T; \kappa,\theta, \sigma, y_0)=y_0 \frac{\sigma^2}{\kappa} (e^{- \kappa T}-e^{-2\kappa T}) + \frac{\theta \sigma^2}{2 \kappa}(1-e^{- \kappa T})^2.
\end{equation}
We denote by $\operatorname{Vol}_{\lambda}^m(T)$ the market volatility of the default intensity. We will try to minimize the error between the theoretical and historical volatilities, and we set $\operatorname{Vol}_{\lambda}(T;\dots)=\sqrt{\operatorname{Var}_{\lambda}(T;\dots)}$. Regarding the choice of the objective function, it turns out that choosing the Sum of Squared Relative Error (SSRE)\footnote{This error measurement metric is not common, however it is the sum and root-less version of the root mean squared relative error found in \citep{errors}.} achieves the best results as it leads to good and consistent calibrations (see figures \ref{fig:calibbase} and \ref{fig:calibstress}). Let $\Theta=\{\kappa, \theta, \sigma, y_0\}$ be
the set of parameters to calibrate and $\Theta^*=\{\kappa^*, \theta^*, \sigma^*, y_0^*\}$ the set of calibrated parameters. Let $M$ be the number of maturities used for the calibration. We calibrate on the volatilities for each of the $M$ maturities $\{T_i\}_{i\in \llbracket1, M\rrbracket}$. Therefore, we solve: 
\begin{equation}
    \label{eq:calib}
    \Theta^{*} = \underset{\Theta}{\arg\min}\sum_{i=1}^M{\left[\frac{\operatorname{Vol}_{\lambda}^m(T_i) - \operatorname{Vol}_{\lambda}(T_i;\,\Theta)}{\operatorname{Vol}_{\lambda}^m(T_i)}\right]^2}.
\end{equation}
We have included $y_0$ in the set of parameters to calibrate. This is not mandatory. The only relation we need to ensure is that whatever the choice of $y_0$, and the way it is obtained, the equality $\psi(0) = \lambda^m(0)-y_0$ must hold. Adding it to the parameters to calibrate gives us another degree of freedom to achieve the best calibration possible. From equation \eqref{eq:varlambda} we get the theoretical volatility. To obtain the historical volatilities of default intensity, we apply the procedure described in \citep{volat} for interest rates to each default intensity horizon. In that article, the authors use weekly data. For each week, the volatility calculation is made using that week and the previous 12 or 51 weeks and the volatility is represented by the standard deviation over the period for each week. We chose 51 weeks as we found it to give more stable results. This provides us with a time series of volatilities of default intensity for each horizon. Finally we need to choose one value driven by this time series to be representative of the whole volatility. Many solutions seem reasonable, for instance taking the average value. In this paper, we make the more conservative choice—again, to adopt a more risk-oriented perspective—by taking the highest values. That is to say that the maximal weekly volatility obtained via the process described above will represent the historical volatility over that period. This choice is very conservative but it is the one that guarantees that any economic stress encountered within the calibration process will be taken into account. We recommend practitioners who want to be less conservative to take the median or the average value of the window. Also, as we have already stated when this model is used for pricing issues, calibrating on the last few values of a given quantity (prices for instance) would be more suited.

\paragraph{Data set}\mbox{}\\
For the sake of visualization, calibration, and diffusion, we will focus on the credit spread of the French bank, Credit Agricole. We begin by observing Credit Agricole's risky bond yield and the risk-free one, which we take as the French government bond. Both data sets can be obtained via private market data providers (Bloomberg, Refinitiv Eikon, \dots). The data is daily and covers the period from January $01^{\operatorname{st}}$, 2009, to January $01^{\operatorname{st}}$, 2024. The values are in basis points (BPs).
\begin{figure}[H]
    \centering
    \includegraphics[width=15.5cm, height=6.cm]{./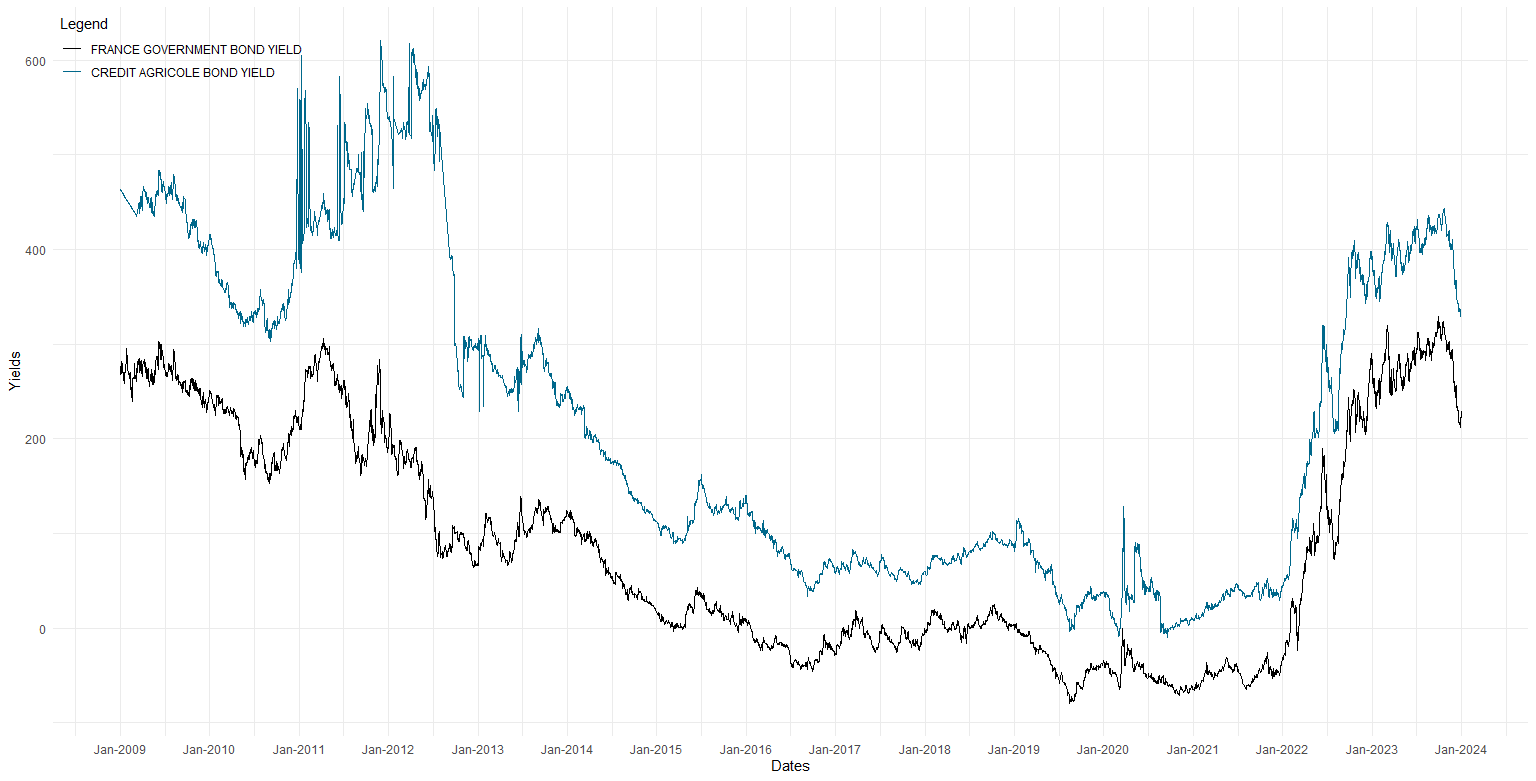}
    \vspace{-0.4cm}
    \caption{In BP, risk-free 5Y bond yield in blue (french government bond) and risky 5Y bond yield (Credit Agricole)}
    \label{fig:yields}
\end{figure}
Then, we observe the 5Y credit spread. It is computed as the difference between the French government bond yield and Credit Agricole's bond yield. Note that as of December 2023, S\&P rates the French government as AA and Credit Agricole as AA-, as of September 2023. 

\begin{figure}[H]
    \centering
    \includegraphics[width=15.5cm, height=6.cm]{./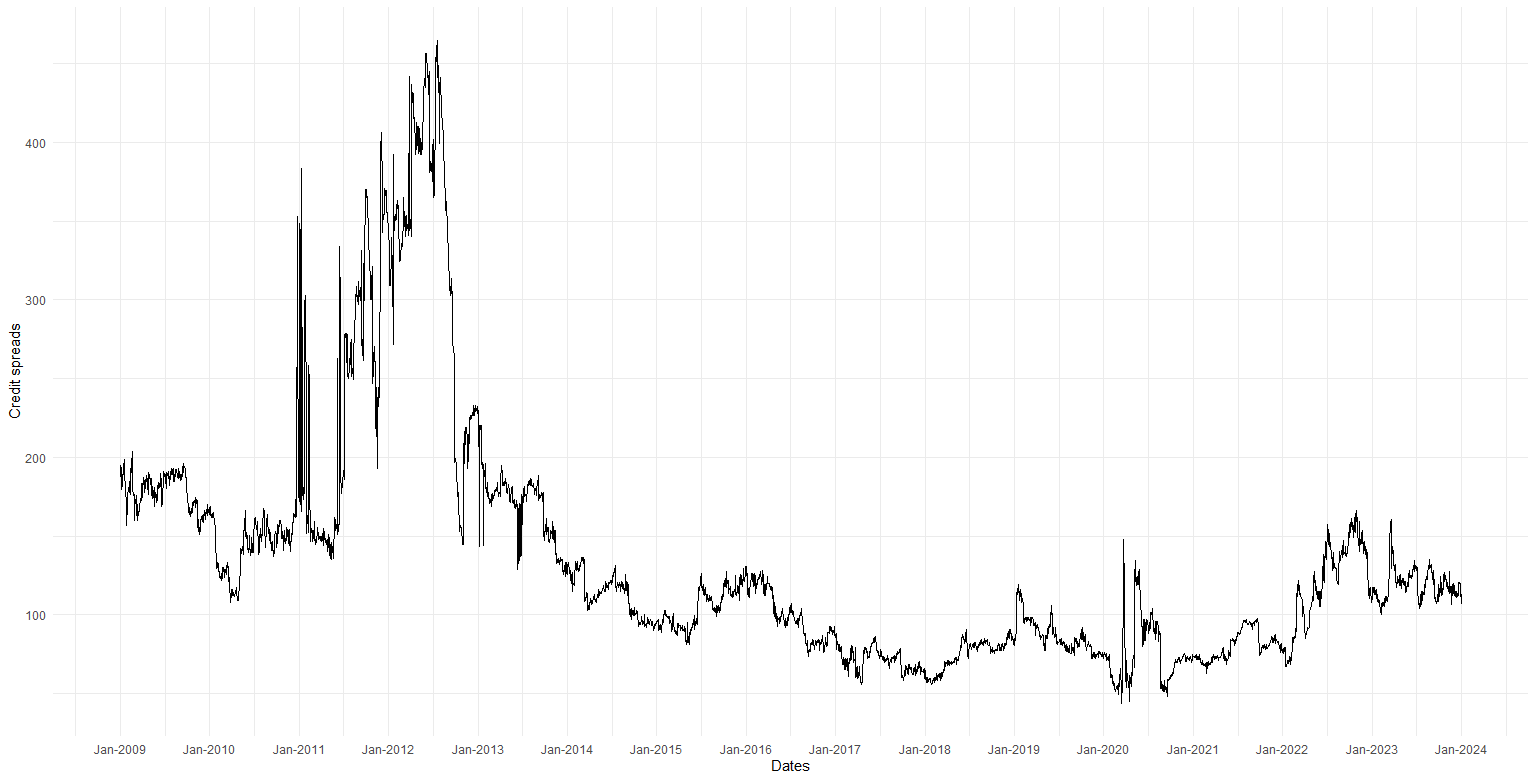}
    \vspace{-0.4cm}
    \caption{In BP, Credit Agricole's 5Y credit spread}
    \label{fig:spread}
\end{figure}
Since for our model we need both default intensities and survival probabilities for the diffusion of credit spreads via equation \eqref{eq:spfinal} and for the calibration, we can use the CDS price curve of Credit Agricole to bootstrap said default intensities and survival probabilities. The CDS curve has the same range as the yields and credit spreads (2009-01-01 to 2024-01-01) and is provided in Appendix \ref{sec:appendixcalibdata}. The bootstrapping procedure is not discussed in this paper but is explained in Chapter 22.3 of \citep{brigo2006interest}. Practically, in this paper, we have used R's library \textit{Credule}\footnote{\url{https://CRAN.R-project.org/package=credule}}. We also present in Section \ref{sec:mcrnp} a way to imply the survival probabilities from credit spreads. Figure \ref{fig:cds} below shows the CDS curve. Figures \ref{fig:di} and \ref{fig:srp}, provided in Appendix \ref{sec:appendixcalibdata}, show, respectively, the bootstrapped 5Y horizon default intensities and survival probabilities. Finally, we take the recovery rate to be $\delta = 40\%$.

\paragraph{Calibration Results}\mbox{}\\
We use the data set described in the previous paragraphs to compute, for each maturity, the annual volatility of the credit spreads. We use the set of maturities \{1Y, 3Y, 5Y, 7Y, 10Y\}, which are reputed to be more reliable (regarding the availability of quotations and liquidity issues). As we have at our disposal a large data set, and since, as we have already stated, in this paper—especially with the RW modeling—we are more risk management-oriented, we make two calibrations.

\begin{itemize}
    \item \textit{Global scenario}\\
    We aim for the model to be representative of the historical market situation. We take the last fifteen years as a reference to capture this historical behavior. Therefore, we calibrate using default intensities from 2009-01-01 to 2024-01-01. This data set ensures that we have sufficient data to represent the overall market behavior. This window includes stressed periods, such as the European/Greek Government Debt Crisis (see, for instance, \citep{eurozonecrisis}), as well as relatively calm periods.
    
    \item \textit{Present scenario}\\ ThisThis scenario is intended to represent the current behavior of credit spreads. We focus on the last two years, from 2022-01-01 to 2024-01-01. This calibration does not include any visually extreme variations in credit spreads, even though it captures part of the 2021-2022 post-pandemic inflation period (see, for example, \cite{inflation}). As shown in Figure \ref{fig:yields}, the risk-free and risky yields seem to have reacted similarly to the crisis, resulting in relatively stable credit spreads, as depicted in Figure \ref{fig:spread}. This scenario is therefore expected to yield more favorable parameters. 
\end{itemize}

The SSRE errors, as determined by equation \eqref{eq:calib}, are $1.009 \times 10^{-3}$ and $3.276 \times 10^{-3}$, respectively, for the \textit{global scenario} and the \textit{present scenario}. The parameters are given, respectively, by Tables \ref{table:resbase} and \ref{table:resstress}. Figures \ref{fig:calibbase} and \ref{fig:calibstress} show the calibration's fittings.
\begin{table}[H]
    \centering 
    \begin{tabular}{c c c c}
        \hline \hline 
         & & & \\ [-0.4em] 
        $\kappa$ & $\theta$ & $\sigma$ & $y_0$ \\ [0.4em] 
        \hline \hline 
        & & & \\
        $5.138 \times 10^{-1}$ & $1.497 \times 10^{-2}$ & $8.904 \times 10^{-2}$ & $4.348 \times 10^{-2}$ \\ [0.5em] 
        & & & \\
    \end{tabular}
    \caption{Parameters calibrated for the \textit{Global scenario}}
    \label{table:resbase}
\end{table}
\begin{table}[H]
    \centering 
    \begin{tabular}{c c c c}
        \hline \hline 
         & & & \\ [-0.4em] 
        $\kappa$ & $\theta$ & $\sigma$ & $y_0$ \\ [0.4em] 
        \hline \hline 
        & & & \\
        $9.186 \times 10^{-2}$ & $5.519 \times 10^{-4}$ & $1.006 \times 10^{-2}$ & $3.074 \times 10^{-2}$ \\ [0.5em] 
        & & & \\
    \end{tabular}
    \caption{Parameters calibrated for the \textit{Present scenario}. Volatilities are expressed in BPs.}
    \label{table:resstress}
\end{table} 

\vspace{-9.5mm}
\begin{figure}[H]
    \centering
    \begin{minipage}{0.5\textwidth}
        \begin{figure}[H]
            \centering
            \includegraphics[width=7.cm, height=6.5cm]{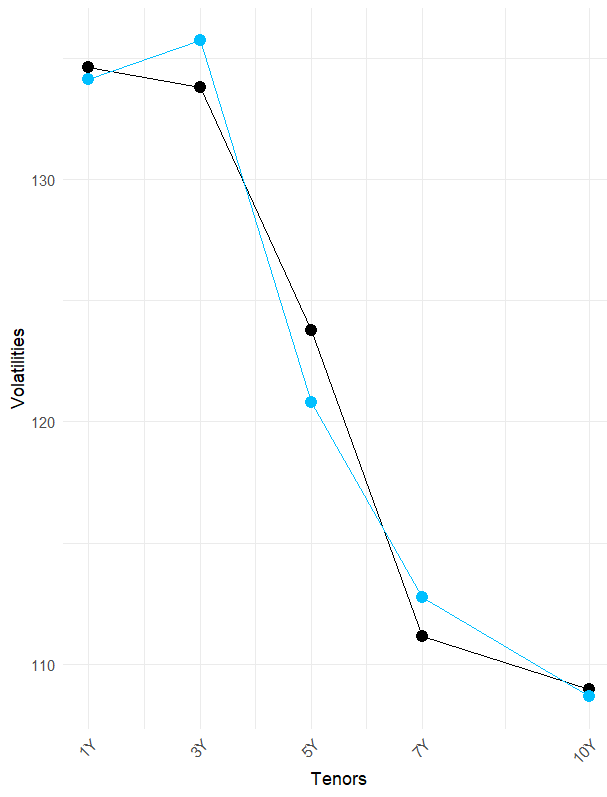} 
            \vspace{-1.5mm}
            \caption{Calibration's fitting for the \textit{global scenario}}
            \label{fig:calibbase}
        \end{figure}
    \end{minipage}\hfill
    \begin{minipage}{0.5\textwidth}
        \begin{figure}[H]
            \centering
            \includegraphics[width=8.25cm, height=6.5cm]{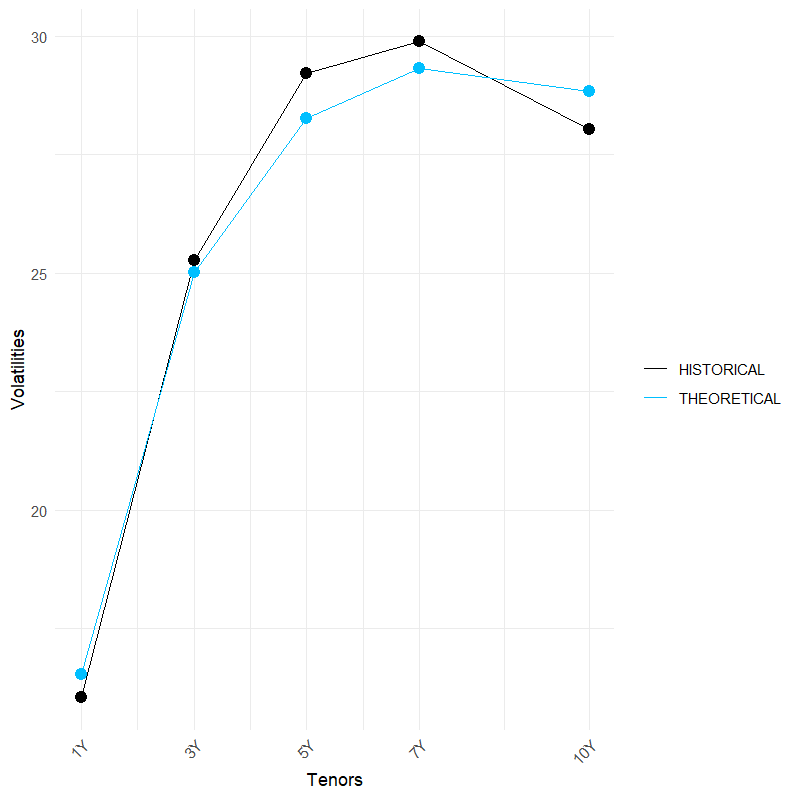} 
            \vspace{-5.5mm}
            \caption{Calibration's fitting for the \textit{present scenario}. Volatilities are expressed in BPs.}
            \label{fig:calibstress}
        \end{figure}
    \end{minipage}
\end{figure} 
It is interesting to compare the parameters in the two scenarios. In fact, we observe a very appreciable behavior: both the volatility, $\sigma$, and the convergence parameter, $\theta$, are considerably decreased for the present (stress-less) scenario. Practitioners can use any time range, expert-wisely chosen, to calibrate their parameters, whether it is for the base scenario, the present one, or even a stressed one. This is a considerable advantage of calibrating using quantities that are representative of the behavior over an entire period. Indeed, the significant increase in volatility for the global scenario, where volatilities range from 110 to 130 BPs as opposed to 20 to 30 BPs during the 2022-2024 period, is reflected in the calibration by the decrease of the parameters, especially $\sigma$. In the next sections, we will only use the global scenario parameters.

\paragraph{Comment on the shape of the volatilities} \mbox{}\\
It is important to make a quick comment on the shapes of volatilities of default intensity. We observe a decreasing curve with respect to maturities for the global scenario and an increasing one for the present scenario. We have conducted many tests on different time frames, and the observation is always the same: if the time frame's main event is a stress period (meaning that the time frame is short and includes a minor stress or the time frame is large and includes a sufficiently large stress, like our global scenario), the volatility decreases with maturity; otherwise, it increases. As far as we know, this behavior is not studied in the literature, likely because it doesn't seem to have a significant impact. In this paper, we do not try to find the rationale behind this observed behavior. However, we are inclined to guess a link with the inversion of yield curves during stress periods (see, for instance, \citep{invertedyieldcurve}), as we have also observed an inversion of the credit spread curve under stress periods. 

\section{Diffusion of credit spreads in the model}
\label{sec:mcrnp}
After having defined, then calibrated the model, we can now diffuse the credit spreads. Their diffusion is fairly straightforward using equation \eqref{eq:spfinal}. The first step is to diffuse the default intensity $\lambda(t)$ and then apply equation \eqref{eq:spfinal}. For the diffusion of $\lambda(t)$, we use the actual distribution of the solution of the CIR equation, $y(t)$, then we add the function $\phi(t)$. 
The probability density of $y(t)$ is given in \citep{cox2005theory} and is:
\begin{equation*}
    f(y(t), t \; ; \; y(s), s)=c e^{-w-v}\left(\frac{v}{w}\right)^{q / 2} I_q\left(2(w v)^{1 / 2}\right), 
\end{equation*}
where,  
$
    c := \frac{2 \kappa}{\sigma^2\left(1-e^{-\kappa(t-s)}\right)}, w := c y(s) e^{-\kappa(t-s)}, v := c y(t), \text{ and } q := \frac{2 \kappa \theta}{\sigma^2}-1
$
and $I_q(\cdot)$ is the modified Bessel function of the first kind of order $q$. The distribution function is the noncentral chi-square, $\chi^2[2 c y(s) ; 2 q+2,2 w]$, with $2 q+2$ degrees of freedom and parameter of noncentrality $2 w$ proportional to the current spot rate. In practice, in this paper for the simulations, we used R’s library sde\footnote{\href{https://CRAN.R-project.org/package=sde}{https://CRAN.R-project.org/package=sde}}. Note that this distribution is conditional on past values. Therefore, even when we are not using a diffusion scheme, we need to set a time step and keep track of at least one previous value. In all the simulation in this paper, the time step is one week. We start from the configuration of the market at the simulation date, say January $1^{st}$, 2024. Using the process described above, we simulate 20,000 paths of credit spreads for $T= 2 (years)$ and with a time step of 1 week. Figure \ref{fig:avgdiff} shows the expectation (average) of credit spreads simulated at the origin (week 0) and at weeks 25, 50, 75, and 100. Figure \ref{fig:quantdiff} shows the 10\% and 90\% quantiles of the term structure of credit spreads simulated for weeks 25, 50, 75, and 100.

\vspace{-8.5mm}
\begin{figure}[H]
    \centering
    \begin{minipage}{0.495\textwidth}
        \begin{figure}[H]
            \centering
            \includegraphics[width=7.cm]{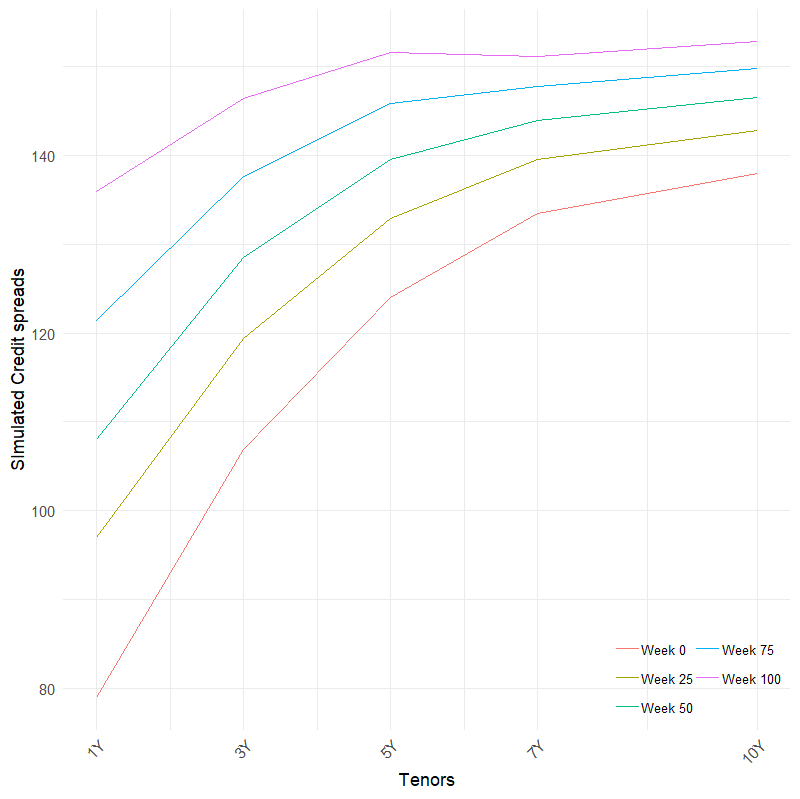} 
            \vspace{-1.5mm}
            \caption{\footnotesize In BP, average term structure of simulated credit spreads}
            \label{fig:avgdiff}
        \end{figure}
    \end{minipage}\hfill
    \begin{minipage}{0.495\textwidth}
        \begin{figure}[H]
            \centering
            \includegraphics[width=7.cm]{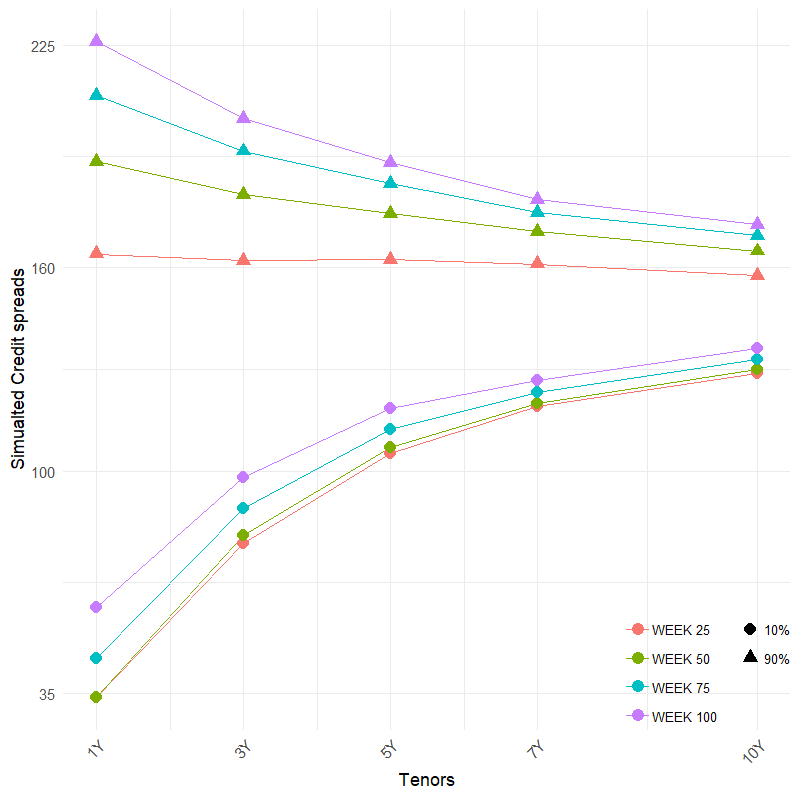} 
            \vspace{-1.7mm}
            \caption{\footnotesize In BP, quantiles 10\% and 90\% of simulated term structure of credit spreads}
            \label{fig:quantdiff}
        \end{figure}
    \end{minipage}
\end{figure} 
Figures \ref{fig:avgdiff} and \ref{fig:quantdiff} serve as the first back-tests of our model. They indeed show that our model produces realistic term structure of credit spreads over time and that the 10\%-90\% inter-quantile range seems coherent. Figure \ref{fig:histdiff5Y} shows the distribution of the 5Y credit spreads after about 1Y and 2Y (weeks 50 and 100). The figure also demonstrates a realistic shift in the credit spreads distribution over time. In subsection \ref{sec:backtest}, we conduct a more robust back-test of our approach.
\begin{figure}[H]
    \centering
    \includegraphics[width=15.5cm, height = 6.5cm]{./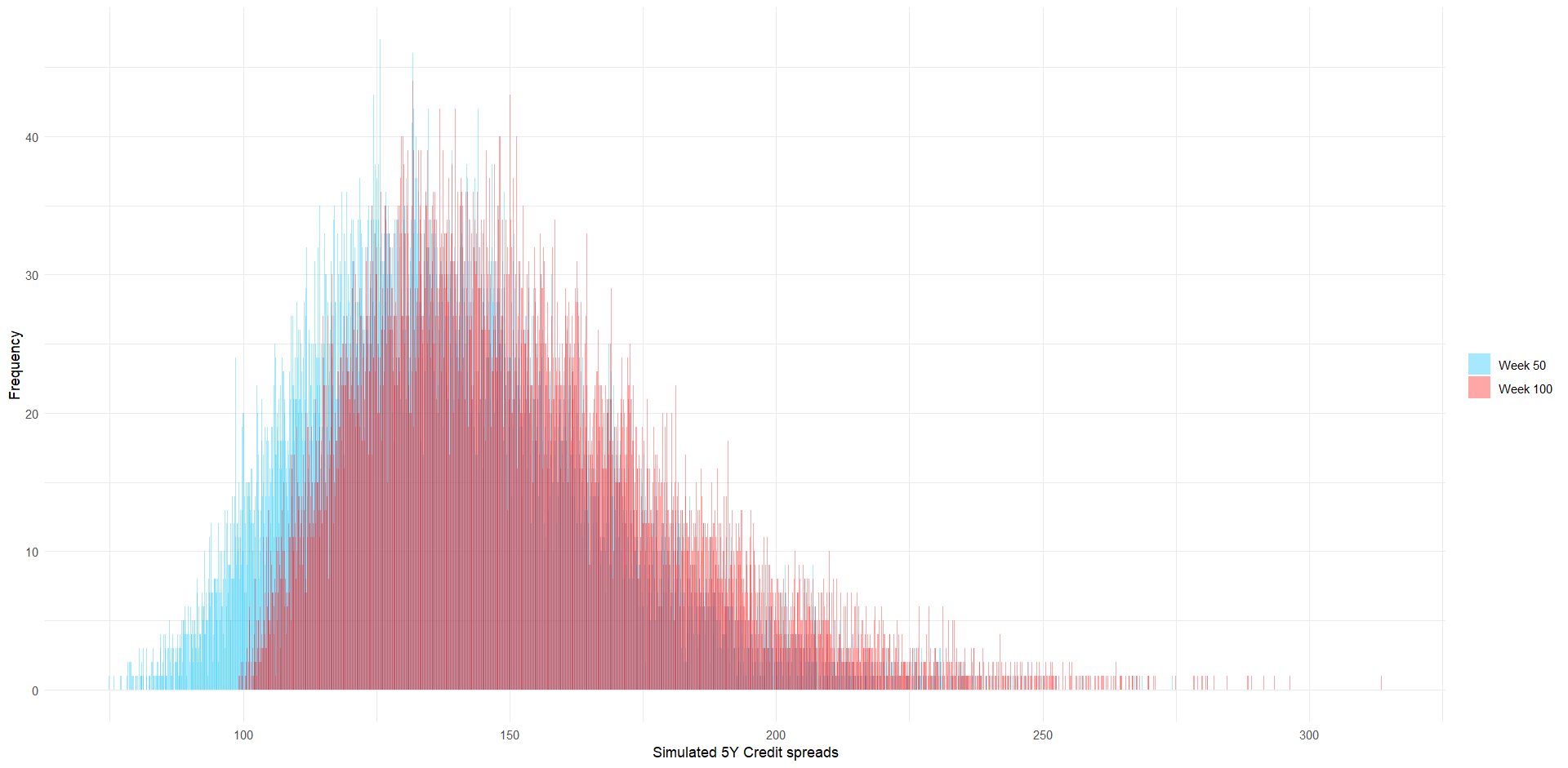}
    \caption{In BP, histogram of the simulated 5Y credit spreads after 50 weeks and 100 weeks.}
    \label{fig:histdiff5Y}
\end{figure}

\paragraph{Fitting the initial market term structure} \mbox{} \\ 
As previously discussed, the model fits the term structure of the cumulative hazard rate, which inherently corresponds to fitting the structure of survival probabilities. An advantageous consequence of this is the model's capability to fit the prices of Credit Default Swaps (CDS) for the risky asset, as survival probabilities are often implied from CDS values. Alternatively, in this model, one may opt to fit the initial credit spread values instead of CDS. In this scenario, the model permits the implied derivation of initial survival probabilities from the initial credit spreads. Equation \eqref{eq:spfinal} allows us to express the initial market survival probabilities, denoted as $S^m(0, T)$, as a function of the initial market credit spreads, $\operatorname{Sp}^m(0,T)$ and independently of the parameters of the model:
\begin{equation}
    \label{eq:fitsp0}
    S^m(0, T)=\frac{e^{-T \operatorname{Sp}^m(0, T)}-\delta}{1-\delta}
\end{equation}
It is noteworthy that these $S^m(0,T)$ values are always below 1 for positive credit spreads. However, ensuring their positivity requires a non-restrictive condition, which stipulates $\operatorname{Sp}^m(0,T) < -\operatorname{ln}(\delta)/T$. With the $\delta$ value of 40\% utilized in this article, this condition remains non-restrictive, allowing for credit spreads up to 900 basis points, even for maturities as long as 10 years.

\section{Back-testing the model}
\label{sec:backtest}
As we have already stated, the figures in the previous section show that the model indeed produces a realistic term structure of credit spreads over time. However, we want to introduce another measure of our model's coherence and consistency. Specifically, we aim to evaluate how our model's extreme values behave with respect to the actual values of credit spreads. This back-test will also serve as a validation of the calibration. The procedure is as follows:
\begin{enumerate}
    \item We have selected a suitable time horizon to analyze the actual observed credit spreads alongside the extreme values generated by our model. To ensure that the initial conditions adequately inform subsequent periods, we have capped the duration at 200 weekly intervals, roughly equivalent to a four-year time frame. Therefore, we take the period spanning from January 1, 2020, to January 1, 2024.
    \item We start from the market conditions at the beginning of the period (2020-01-01). Using the diffusion process presented in the previous section, we simulate 20,000 paths of credit spreads.
    \item For a specific maturity term, such as the 5-year tenor, we examine how various quantiles encompass the actual values of credit spreads.
\end{enumerate}

Figure \ref{fig:backtest} shows the 5Y credit spreads and the quantiles 1\%, 10\%, 20\%, 30\%, 70\%, 80\%, 90\% and 99\% of the simulated credit spreads. We can observe how at the 99\% level, only one historical credit spread value is beyond our simulation. This result demonstrates how conservative our model is and how much the credit spread values it gives are realistic. However it is important to note that this back-test algorithm is very restrictive and highly demanding. Credit spreads time series might have jumps or drops and especially important change of slopes that one might not be able to completely reproduce over time. We achieve the good results observed in figure \ref{fig:backtest} in large part thanks to the conservative calibration we have made in section \ref{sec:calib}, when choosing the calibration period as well as when choosing the representative volatility value. In fact we can actually see that the quantile 99\% being close to the peak between 2020 and 2021 implies that it is way above the values that come after. Also if the window analysis had included significant changes in slopes, this back-test could have missed them while the model would have remained sufficiently coherent in terms of shapes of credit spreads term structures or of range of values.

 \begin{figure}[H]
    \centering
    \includegraphics[width=15.75cm, height=7cm]{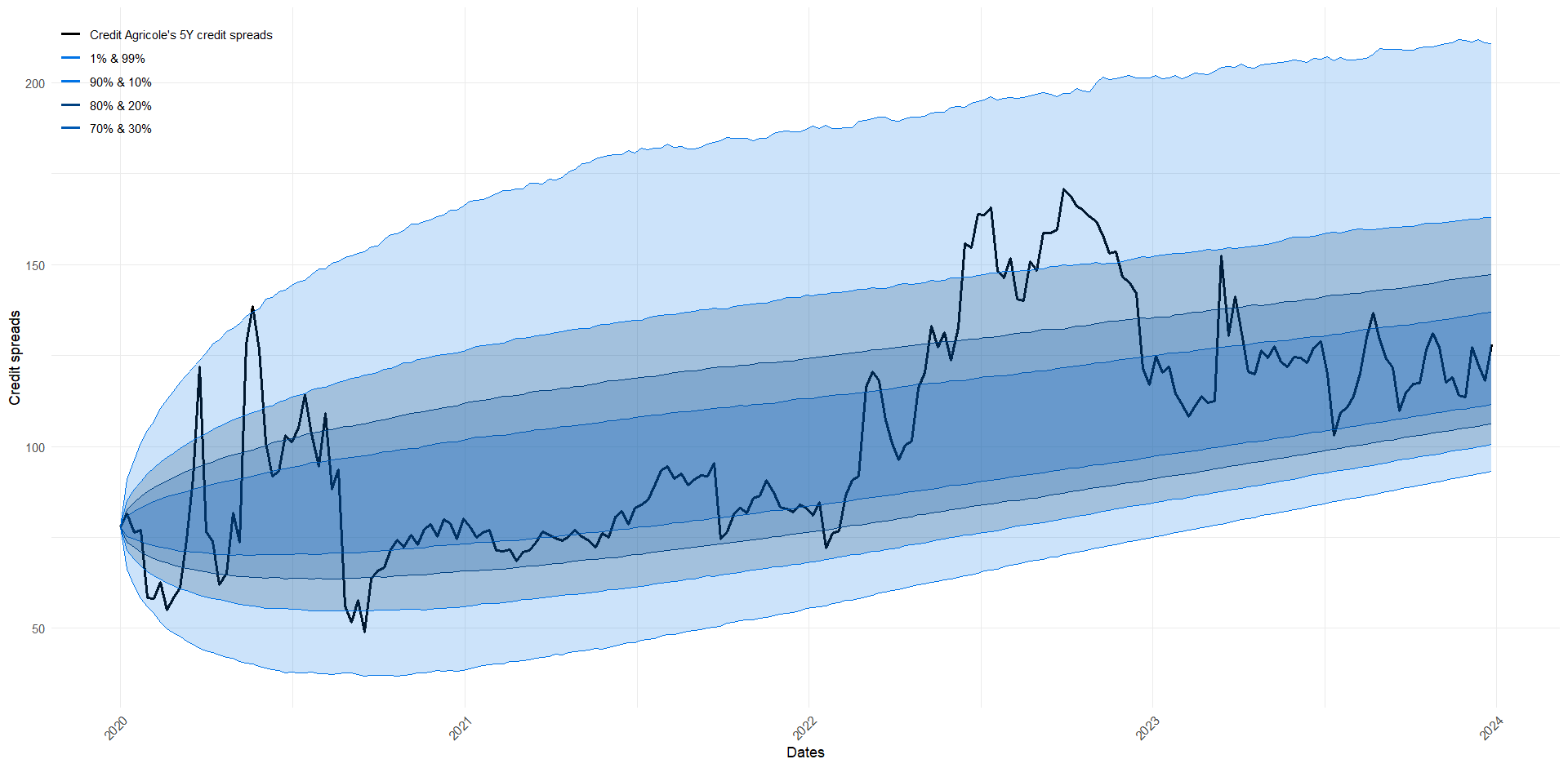}
    \caption{In BP, 5Y Credit Agricole's credit spreads observed values (black) and quantiles (shades of blue).}
    \label{fig:backtest}
\end{figure}
Note that for this back-test simulation, we have used equation \eqref{eq:fitsp0} to fit the initial term structure of credit spreads. As we can see, the initial value of the simulated 5Y credit spread matches the initial market value.

\section{Conclusion}

In this paper, we introduced a novel stochastic model for credit spreads using a CIR++ intensity model, addressing significant gaps in the current literature. Our model provides several key benefits that enhance its practical utility and theoretical robustness.

Firstly, the explicit expression of the credit spread under the CIR++ intensity model offers a powerful tool for practitioners. This explicit formulation simplifies the process of deriving credit spreads and allows for more straightforward applications in various financial computations and risk assessments. The ability to directly obtain the term structure of credit spreads is particularly valuable, as it enables a more accurate reflection of market conditions over different time horizons. This direct term structure derivation is crucial for pricing, risk management, and strategic financial planning, providing a comprehensive view of credit risk dynamics.

An important intermediary result of our research is the analytical expression for the price of a defaultable bond. This contribution not only enriches the understanding of defaultable bond pricing but also integrates seamlessly with our credit spread model. The simplicity and effectiveness of our calibration approach, which relies on the volatilities of default intensities, ensure that both models are robust and conservative. The calibration process was meticulously conducted by minimizing the Sum of Squared Relative Error (SSRE) between historical volatilities of default intensities by horizon and the theoretical expression of the standard deviation of the default intensity. This approach ensured the accuracy and reliability of the model, as shown by the back-test. The latter revealed how the extreme quantiles of the simulated credit spreads encompassed the actual historic credit spreads, further validating the model’s effectiveness.

In conclusion, the introduced stochastic model for credit spreads contributes to the literature by offering a robust, practical, and analytically sound framework. Future research could explore several exciting avenues to further enhance the model's applicability and depth. One potential direction is the application of the model to the pricing of options and the calculation of XVA (valuation adjustments), which are crucial for comprehensive risk management and derivative pricing. Additionally, incorporating more complexity into the model, such as introducing jumps or establishing correlations with other risk factors, could provide a more nuanced understanding of credit spreads and their interactions with broader market conditions.

Our work sets a solid foundation for more sophisticated credit spread modeling, promising significant advancements in financial risk management and economic forecasting. By continuing to refine and expand upon this model, researchers and practitioners alike can develop more effective strategies for navigating the complexities of credit risk.

\newpage 

\bibliographystyle{unsrtnat}
\bibliography{references}  

\newpage

\appendix

\section{Calibration Data}
\label{sec:appendixcalibdata}
Figure \ref{fig:cds} shows the historical values of Credit Agricole's CDS.
\begin{figure}[H]
    \centering
    \includegraphics[width=16.5cm, height=6.5cm]{./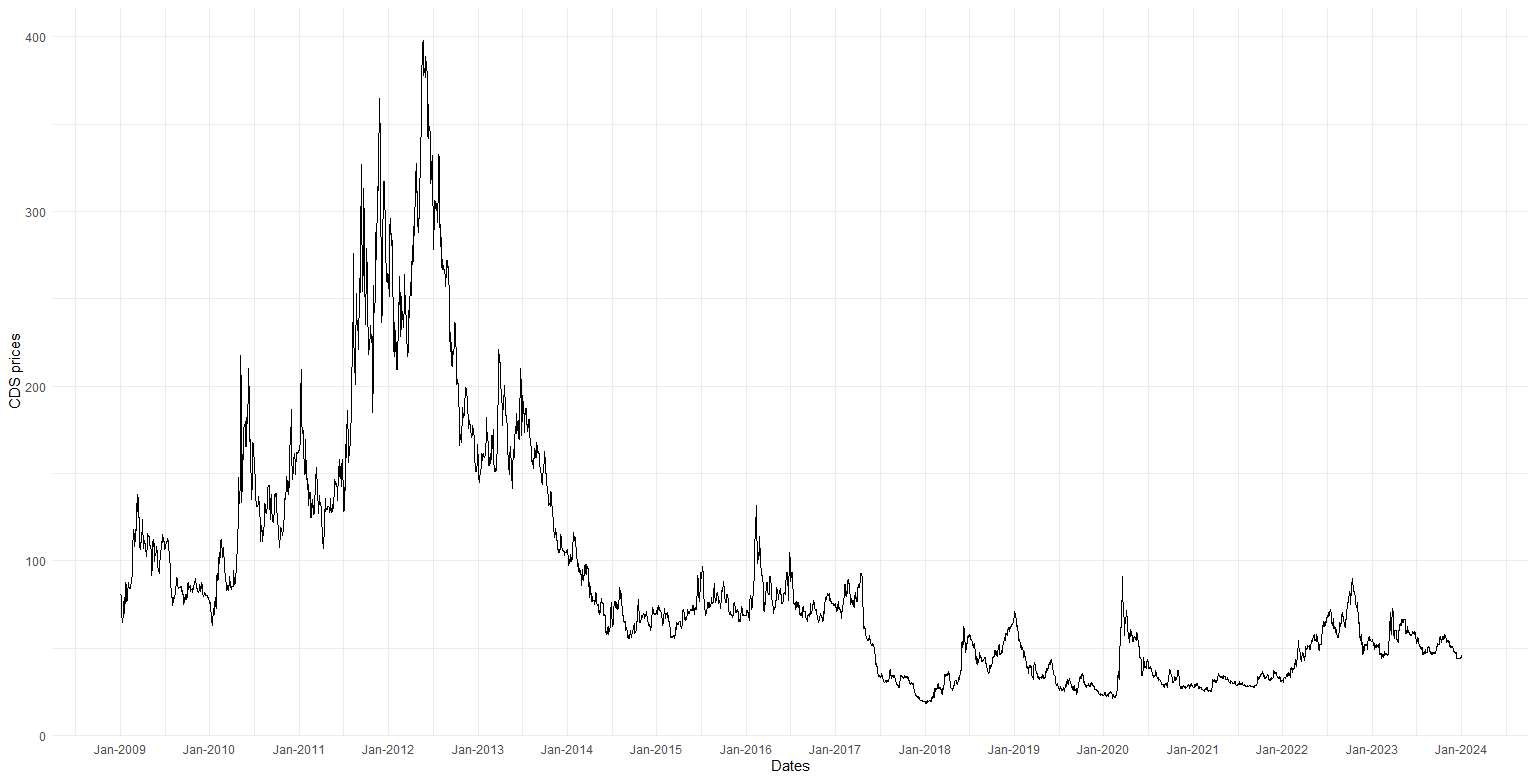}
    \caption{In BP, Credit Agricole's 5Y CDS}
    \label{fig:cds}
\end{figure}

Figures \ref{fig:di} and \ref{fig:srp} respectively show the bootsrapped default intensity and the survival probabilities.
\begin{figure}[H]
    \centering
    \includegraphics[width=15.5cm, height=6.cm]{./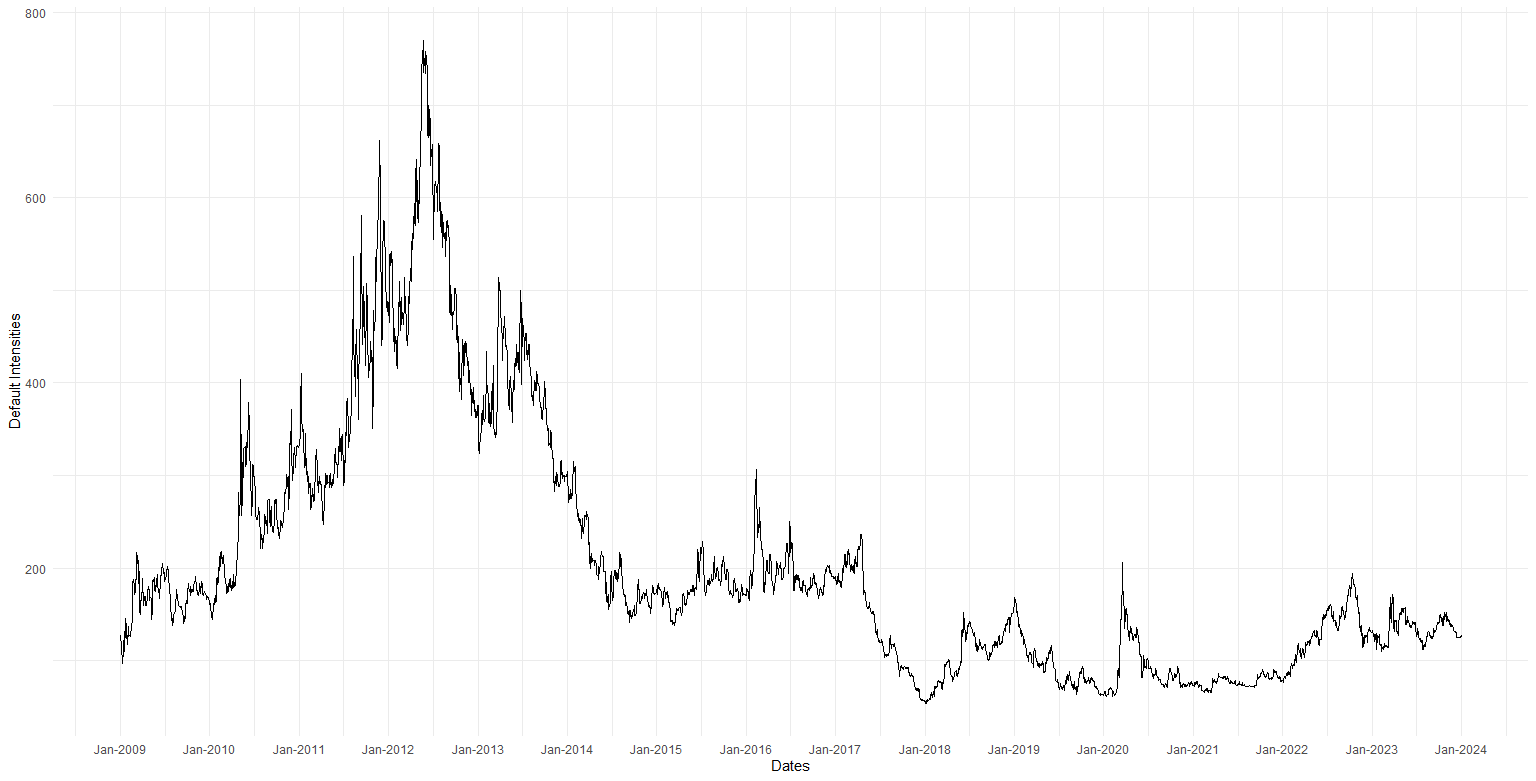}
    \caption{In BP, Credit Agricole's Default Intensity for the 5Y horizon}
    \label{fig:di}
\end{figure}

\begin{figure}[H]
    \centering
    \includegraphics[width=15.5cm, height=6.cm]{./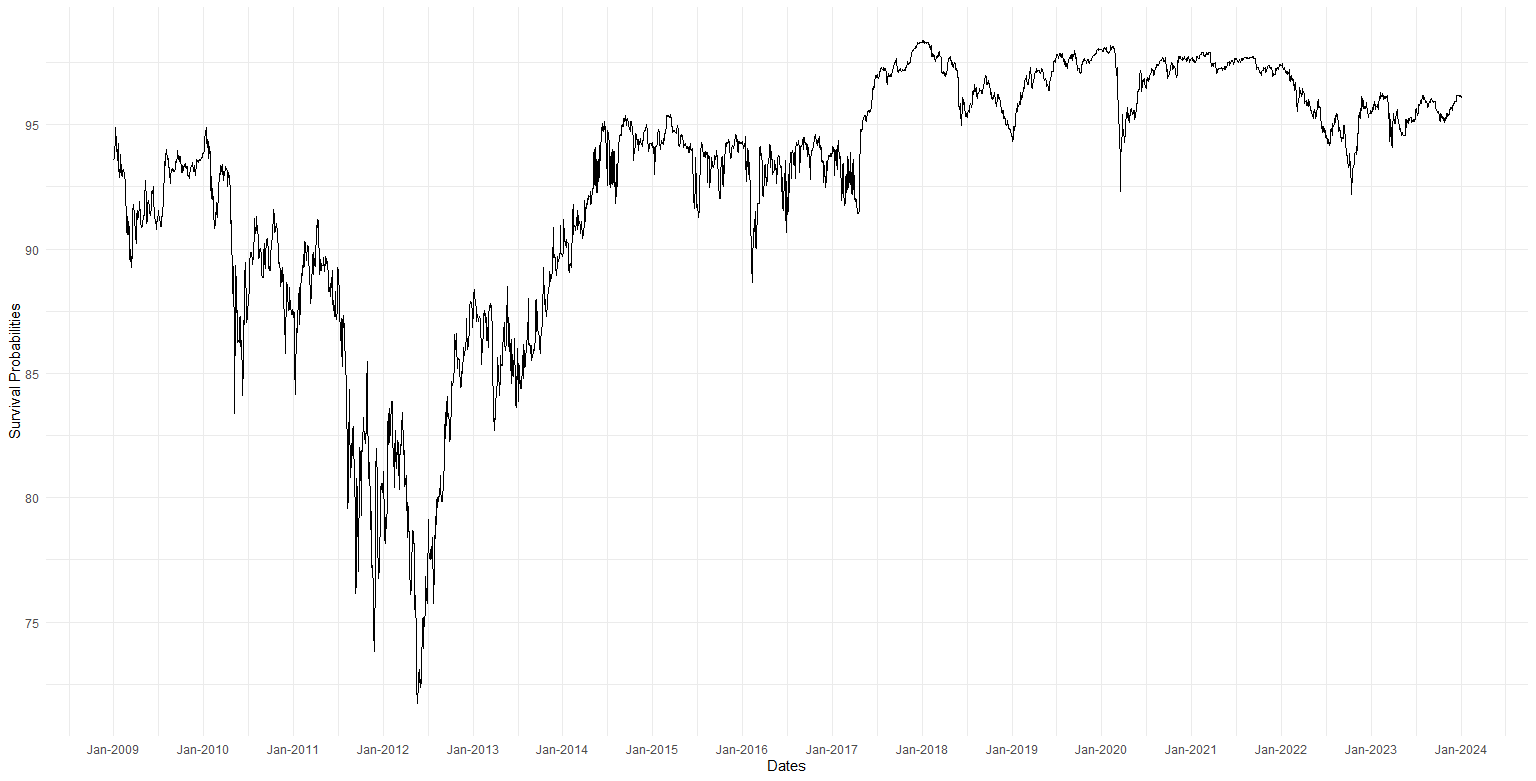}
    \caption{In percents, Credit Agricole's Survival Probabilities for the 5Y horizon}
    \label{fig:srp}
\end{figure}

\end{document}